\newtheorem{theorem}{Theorem}
\newtheorem{lemma}{Lemma}
\newcommand{\Perp}{\perp \! \! \! \perp}
\newenvironment{definition}[1][Definition]{\begin{trivlist}
\item[\hskip \labelsep {\bfseries #1}]}{\end{trivlist}}
\begin{document}
\title{The Degraded Poisson \\Wiretap Channel\thanks{Amine Laourine and Aaron B. Wagner are with the School of Electrical and Computer Engineering, Cornell University, Ithaca,
NY 14853 USA. (Email: al496@cornell.edu, wagner@ece.cornell.edu.). Part of this paper will be presented in the 2010 IEEE International Symposium on Information Theory (ISIT 2010).}}

\author{Amine Laourine and Aaron B. Wagner}

\maketitle

\begin{abstract}
Providing security guarantees for wireless communication is critically important for today's applications. While previous work in this area has concentrated on radio frequency (RF) channels, providing security guarantees for RF channels is inherently difficult because they are prone to rapid variations due small scale fading. Wireless optical communication, on the other hand, is inherently more secure than RF communication due to the intrinsic aspects of the signal propagation in the optical and near-optical frequency range. In this paper, secure communication over wireless optical links is examined by studying the secrecy capacity of a direct detection system. For the degraded Poisson wiretap channel, a closed-form expression of the secrecy capacity is given. A complete characterization of the general rate-equivocation region is also presented. For achievability, an optimal code is explicitly constructed by using the structured code designed by Wyner for the Poisson channel. The converse is proved in two different ways: the first method relies only on simple properties of the conditional expectation and basic information theoretical inequalities, whereas the second method hinges on the recent link established between minimum mean square estimation and mutual information in Poisson channels.
\end{abstract}
\begin{keywords}
Information-theoretic security, wiretap channel, Poisson channel, direct detection optical communications.
\end{keywords}
\newpage
\section{Introduction}
\PARstart{P}{rotecting} information flow from unauthorized access is of vital importance for today's applications. The concept of information secrecy however is not new and dates back to the pioneering work of Shannon \cite{Shannon}. Shannon considered a strong notion of secrecy requiring essentially that the eavesdropper's received signal be independent of the communicated message. Under this constraint Shannon showed that the transmitter must share a private key with the legitimate receiver whose entropy is at least as large as the message's entropy. This negative result lead to the development of modern cryptographic systems whose objective is not to provide strong secrecy but rather to make security breaches computationally prohibitive. These encryption algorithms are employed today in most systems where secrecy is required. This wide deployment, however, should not obscure the fact that these methods do not offer strong security guarantees. Information-theoretical security can provide such guarantees and as such has attracted considerable attention lately.

The pessimistic result of Shannon is due to the implicit assumption that the eavesdropper observes the same signal as the legitimate receiver. In his celebrated paper \cite{Wyner1}, Wyner challenged this assumption by introducing the wiretap channel, a channel in which an eavesdropper observes a degraded version of the signal received by the legitimate receiver. Wyner also considered a different notion of secrecy which requires that asymptotically the equivocation of the eavesdropper about the transmitted message should converge to the message's entropy. With this new framework, and for discrete memoryless channels (DMCs), Wyner gave a complete description of the tradeoff between the information rate at the legitimate receiver and the information leaked to the eavesdropper. Wyner showed that for a DMC there exists an intrinsic quantity called the \emph{secrecy capacity} which gives the maximum information rate that can be reliably transmitted to the legitimate receiver with zero leakage of information to the eavesdropper. Since then, the wiretap channel has become one of the core topics in information theoretic security and several results are available today. Csiszar and Korner \cite{Csiszar} extended Wyner's study to nondegraded DMCs. The degraded Gaussian wiretap channel was studied in \cite{Leung}. Multiple input multiple output (MIMO) Gaussian wiretap channels have been the subject of an extensive interest lately \cite{Khisti1}-\cite{Liu}. Wiretap channels in the presence of fading have been also investigated in \cite{Gopala}-\cite{Khisti2} (and the references therein).

The results of \cite{Csiszar} show that a non-zero secrecy capacity requires that the legitimate receiver's channel be less noisy than the eavesdropper's.
For RF channels, however, this is difficult to guarantee in practice due
to the possibility of multipath fading. Indeed, even if the legitimate
receiver is closer to the transmitter than the eavesdropper is, the legitimate
receiver may still have a weaker channel due to fading. Moreover, since the
fading state is a sensitive function of the position of the receivers, it
is difficult to predict the degree of fading experienced by the legitimate
receiver and the eavesdropper given imperfect information about their
locations.

One possible solution to this problem is to use optical or near-optical frequencies instead of RF. For optical wireless systems, the detector is usually multiple orders of magnitude larger than the wavelength of the transmitted beam, which provides natural immunity against multipath fading via spatial diversity~\cite{Barry}.
This immunity makes predictions about the quality of the legitimate receiver's and the eavesdropper's signal based on their position more accurate. In fact, with the multipath problem gone, the only major channel impairment remaining is the pathloss which can be safely assumed to be higher for the eavesdropper if the legitimate receiver can guarantee that he is closer to the transmitter.

Another advantage of optical communications over RF is that the transmitted
signal is highly directional, making interception by a malevolent third party more difficult. This should be contrasted with the relatively-omnidirectional nature of RF transmissions, for which the signal is broadcasted over a wide angle. Yet another advantage of wireless optical communication is the spatial confinement of the transmitted optical signal. Indeed, at optical wavelengths, the transmitted signal is absorbed by the atmosphere and beyond a certain range it becomes undetectable. This is desirable from a security standpoint as an eavesdropper located beyond this range is literally kept in the dark.

All of these features give wireless optical communications a clear advantage for security. This technology is already being deployed in the form of infrared communications \cite{Kahn}-\cite{Green}, and ultraviolet (UV) systems are currently under development \cite{Xu}-\cite{Adee}. However, despite the numerous benefits that this technology offers, coding is still needed for those scenarios in which the channel itself does not provide absolute secrecy, such as when the beam of light is reflected or scattered by solid objects, dust or water droplets \cite{Sidorovich}. Although in this case the signal has been degraded, an eavesdropper could still gain valuable information.

We examine the fundamental limits of coding for secure communication over
optical channels by studying the secrecy capacity of the Poisson channel,
a common model for direct detection optical communications systems. In such systems the transmitter sends information by modulating the intensity of an optical signal while the receiver observes the arrival moments of individual photons. The capacity of this channel has been determined under peak power constraint on the transmitted optical power by Kabanov \cite{Kabanov} and under both average and peak power constraint by Davis \cite{Davis}. Wyner \cite{Wyner2} derived the reliability function of this channel for all rates below capacity and constructed exponentially optimal codes. Multiple-access Poisson channels were studied in \cite{Lapidoth2}-\cite{Bross} whereas broadcast Poisson channels were considered in \cite{Lapidoth1} and \cite{Sokolovsky}. The capacity of the Poisson channel has been also investigated in the presence of fading \cite{Chakraborty}.

We study in this paper the degraded Poisson wiretap channel. The legitimate receiver observes a doubly stochastic Poisson process with instantaneous rate $A_{y}X_{t}+\lambda_{y}$ where $\{X_{t},0\leq t\leq T\}$ is the signal transmitted. The eavesdropper's observation is also a doubly stochastic Poisson process with instantaneous rate $A_{z}X_{t}+\lambda_{z}$. For degradedness we assume that\footnote{These conditions were shown to be sufficient for degradedness in \cite{Lapidoth1}. The argument is reproduced in Lemma 1 below.} $A_{y}\geq A_{z}$ and $\lambda_{y}\leq \frac{A_{y}}{A_{z}}\lambda_{z}$. In Theorem 1 we provide a closed form expression of the secrecy capacity as a function of the parameters $(A_{u},\lambda_{u}), \text{ } u\in\{y,z\}$. This result is further extended by Theorem 5 which gives a full characterization of the rate equivocation region.

Our achievability proof uses stochastic encoding as well as the structured codes constructed by Wyner for the Poisson channel \cite{Wyner2}. As for the converse, we will see that the infinite bandwidth nature of the Poisson channel makes it possible to prove the converse using only simple properties of the conditional expectation combined with basic information theoretical inequalities. This is to be contrasted with the converse of the (finite bandwidth) Gaussian channel which is proved using the entropy power inequality (EPI). As an illustration for the basic ideas that underpin the converse for the Poisson channel, we will start by considering here the more familiar infinite bandwidth Gaussian channel and we will see also that for this channel the proof of the converse simplifies considerably.

For this purpose, consider the continuous time Gaussian wiretap channel with bandwidth $B$ (later we will let $B$ tend to infinity) and with a power constraint $P$. This continuous time channel is equivalent to $2B$ uses per second of the discrete time Gaussian channel depicted in Fig. 1. The input signal is power constrained, i.e., $\mathbb{E}[X^{2}]\leq P$, the legitimate receiver observes $Y=X+W_{1}$ and the eavesdropper receives $Z=Y+W_{2}=X+W_{1}+W_{2}$, where $W_{i}\sim\mathcal{N}(0,N_{i}B)$ and $W_{1}\Perp W_{2}$.

\begin{figure}[htb]
\begin{psfrags}
  \psfrag{a}{$W_{2}$}
  \psfrag{b}{$Z$}
  \psfrag{c}{$X$}
  \psfrag{d}{$W_{1}$}
  \psfrag{e}{$Y$}
           \includegraphics{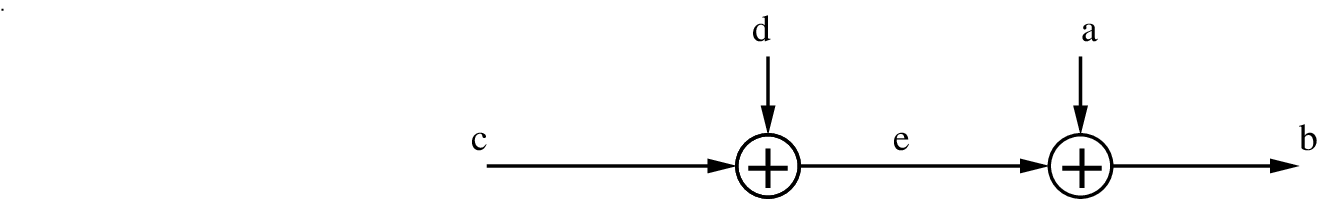}
      \caption{The discrete time Gaussian wiretap channel}
      \end{psfrags}
\end{figure}

Define $\tilde{N}$ by $\frac{1}{\tilde{N}}=\frac{1}{N_{1}}-\frac{1}{N_{1}+N_{2}}$ and observe that
\begin{align}\label{gaussian}
\frac{\tilde{N}}{N_{1}}Y=X+\frac{N_{1}}{N_{2}}Z+\tilde{W},
\end{align}
where $\tilde{W}=W_{1}-\frac{N_{1}}{N_{2}}W_{2}$. It is easy to see that $\tilde{W}\sim\mathcal{N}(0,\tilde{N}B)$ and $\mathbb{E}[\tilde{W}(W_{1}+W_{2})]=0$, it follows therefore that $\tilde{W}\Perp W_{1}+W_{2}$ (since they are jointly Gaussian). For the discrete time Gaussian wiretap channel, it is known that the secrecy capacity is given by $\max_{p_{X}}(I(X;Y)-I(X;Z))$. For the continuous time channel counterpart with bandwidth $B$, the secrecy capacity becomes $C_{s}^{B}=2B\max_{p_{X}}(I(X;Y)-I(X;Z))$.

In \cite{Leung}, using the celebrated EPI, a closed form expression for the secrecy capacity of the discrete time Gaussian wiretap channel was obtained. In just a few steps, we will see that the secrecy capacity of the infinite bandwidth ($B\rightarrow\infty$) Gaussian wiretap channel can be found much more simply. Starting with (\ref{gaussian}) we obtain the following sequence of inequalities
\begin{align}
I(X;Y)&=I(X;X+\frac{N_{1}}{N_{2}}Z+\tilde{W})\nonumber\\
&\stackrel{(a)}{\leq} I(X;X+\tilde{W},Z) \nonumber\\
&\stackrel{(b)}{=}h(X+\tilde{W},Z)-h(X+\tilde{W},Z|X)\nonumber\\
&\stackrel{(c)}{\leq} h(X+\tilde{W})+h(Z)-h(X+\tilde{W},Z|X)\nonumber\\
&\stackrel{(d)}=h(X+\tilde{W})+h(Z)-h(X+\tilde{W}|X)-h(Z|X)\nonumber\\
&\stackrel{(e)}{=}I(X;X+\tilde{W})+I(X;Z),\nonumber
\end{align}
Inequality $(a)$ follows from the data processing inequality, equalities in $(b)$ and $(e)$ are standard information theory identities, $(c)$ follows from the independence bound on entropy and finally $(d)$ holds because $X+\tilde{W}$ and $X+W_{1}+W_{2}$ are conditionally independent given $X$ (i.e., $(X+\tilde{W})\Perp (X+ W_{1}+W_{2})|X$). Basically, the key identity needed to go from $(a)$ to $(e)$ is the following: if $Y_{1}\Perp Y_{2}|X$, then we have $I(X;Y_{1},Y_{2})\leq I(X;Y_{1})+I(X;Y_{2})$. A proof of this simple inequality in a more general setting will be given later and will be used in part of the converse for the Poisson channel. Going back to the Gaussian problem, we see that
\begin{align}
C_{s}^{B}=2B\max_{P_{X}}(I(X;Y)-I(X;Z))\leq 2B\max_{P_{X}}I(X;X+\tilde{W})=B\ln(1+\frac{P}{B\tilde{N}}).
\end{align}
For a fixed bandwidth $B$, this last inequality is not tight. Now letting $B\rightarrow\infty$ we obtain
\begin{align}
C_{s}^{\infty}\triangleq\lim_{B\rightarrow\infty} C_{s}^{B}\leq \frac{P}{\tilde{N}}=\frac{P}{N_{1}}-\frac{P}{N_{1}+N_{2}}.
\end{align}
However, since $\max_{p_{X}}(I(X;Y)-I(X;Z))\geq\max_{p_{X}}I(X;Y)-\max_{p_{X}}I(X;Z)$, we also have that
\begin{align}
C_{s}^{\infty}\geq \lim_{B\rightarrow\infty}\left(B\ln(1+\frac{P}{BN_{1}})-B\ln(1+\frac{P}{B(N_{1}+N_{2})})\right)=\frac{P}{N_{1}}-\frac{P}{N_{1}+N_{2}}.
\end{align}
It follows that $C_{s}^{\infty}=\frac{P}{N_{1}}-\frac{P}{N_{1}+N_{2}}$.

This remarkably simple approach will be particularly useful for the Poisson channel. More specifically, when $\frac{\lambda_{y}}{A_{y}}=\frac{\lambda_{z}}{A_{z}}$, the eavesdropper's signal $Z$ is a thinned version of the legitimate receiver's signal $Y$, i.e., \footnote{The time dependence has been dropped to ease the notations. Refer to the converse part of the paper for a mathematically precise statement.} $Y=Z+\tilde{Z}$ where $Z\Perp\tilde{Z}|X$, the approach above gives that $I(X;Y)-I(X;Z)\leq I(X;\tilde{Z})$. Since $\tilde{Z}$ is itself a doubly stochastic Poisson process, the mutual information $I(X;\tilde{Z})$ can be maximized using the martingale techniques of Kabanov \cite{Kabanov} and an (achievable) upperbound can be obtained on $I(X;Y)-I(X;Z)$. When $\frac{\lambda_{y}}{A_{y}}<\frac{\lambda_{z}}{A_{z}}$, a different bounding technique using only simple properties of the conditional expectation will be devised.

Although no ``sophisticated" tools are required to prove the converse, we show in the appendix that using some new results in information theory an alternative proof can be provided. This different proof hinges on the link that has been established between the mutual information (MI) and the minimum mean square estimation (MMSE) in Poisson channels \cite{Guo}. It is worth noting at this point that the link between the MI and the MMSE in the Gaussian setting \cite{Guo2} has been also used recently for different Gaussian wiretap channels \cite{Ekrem}, \cite{Bustin}.

One of the distinctive aspects in this paper is that we do not resort to the $\Delta$-discretization method introduced by Wyner \cite{Wyner2}. This method was used to approximate the Poisson channel by a binary DMC thereby allowing the transposition of the widely known results for DMCs to the Poisson channel. This technique leads to extensive computations, especially when we are interested in the secrecy capacity as there are now two conflicting objectives involved, the maximization of the information rate at the legitimate receiver and the minimization of the information leakage at the eavesdropper. We circumvent the use of this method by using the techniques described above.

The rest of this paper is organized as follows. The next section describes the setup of the problem and
presents the main result of this paper as well as some interpretations of the obtained result.
The proof of the achievability of the secrecy capacity is given in Section III and the proof of the converse is presented in Section IV. In Section V we extend the main result of the paper by giving a complete characterization of the rate-equivocation region. Finally, in section VI, some possible future directions are discussed.
\section{Problem and Result Statement}
The input process to the Poisson channel is a waveform denoted by $X_{0}^{T}\triangleq \{X_{t},0\leq t\leq T\}$ satisfying $X_{t}\geq0$ for all $t$. We further assume that the input process
is peak power limited, i.e., $X_{t}\leq1$ for all $t$. The received signal at the legitimate receiver $Y_{0}^{T}$ is a doubly stochastic Poisson process with instantaneous rate $A_{y}X_{t}+\lambda_{y}$, i.e., given $X_{0}^{T}$ the stochastic process $Y_{0}^{T}$ has independent increments with $Y_{0}=0$ and for $0\leq s\leq t\leq T$ we have
\[\Pr(Y_{t}-Y_{s}=k|X_{0}^{T})=\frac{1}{k!}\Upsilon^{k}(s,t)e^{-\Upsilon(s,t)},\text{ }k\in \mathbb{N},\]
where
\[\Upsilon(s,t)=\int_{s}^{t}(A_{y}X_{\tau}+\lambda_{y})d\tau.\]
The parameter $A_{y}>0$ accounts for possible signal attenuation at the receiver. The parameter
$\lambda_{y}\geq0$ is the dark current intensity which results from background noise and bears no information on the input process $X_{0}^{T}$.
Similarly the output process of the eavesdropper $Z_{0}^{T}$ is a doubly stochastic Poisson process with
instantaneous rate $A_{z}X_{t}+\lambda_{z}$.


In this paper, the space of doubly stochastic Poisson processes on the interval $[0,T]$ will be denoted by
$\mathcal{P}(T)$. Following the notation used in \cite{Guo} the output process of the Poisson channel in the
interval $[0,T]$ with instantaneous rate $\alpha X_{t}+\lambda$ will be denoted by $\mathcal{P}_{0}^{T}(\alpha
X_{0}^{T}+\lambda)$. We use $\langle X_{t}\rangle_{s}$ to designate $E[X_{t}|\mathcal{P}_{0}^{s}(X_{0}^{s})]$,
as such $\langle X_{t}\rangle_{t}$ refers to the causal conditional mean estimate and $\langle X_{t}\rangle_{T}$
to the noncausal one.

All stochastic processes considered in this paper are defined on a common measurable space
$(\Omega,\mathcal{F})$. We use $\mathcal{F}_{\xi}^{s}$ to denote the internal history generated by the process
$\xi_{0}^{s}$.

In this paper we are interested in the degraded Poisson wiretap channel. Lapidoth et al.
\cite{Lapidoth1} gave conditions on the parameters $(A_{u},\lambda_{u}), \text{ } u\in\{y,z\}$ for stochastic
degradedness. These conditions are presented in the following lemma. In order to prepare for the results to come
we will also briefly go over the proof of this lemma.
\begin{lemma}[Lapidoth, Telatar and Urbanke \cite{Lapidoth1}]
The eavesdropper's channel is stochastically degraded with respect to the legitimate receiver's channel, if
\begin{equation}\label{intensity}
A_{y}\geq A_{z},
\end{equation}
and
\begin{equation}\label{dark}
\lambda_{y}\leq \frac{A_{y}}{A_{z}}\lambda_{z}.
\end{equation}
\end{lemma}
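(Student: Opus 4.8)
The plan is to construct an explicit ``degrading channel'' that acts on the legitimate receiver's output $Y_0^T$ together with some independent randomness and produces a process having exactly the law of the eavesdropper's output $Z_0^T$. This realizes the Markov chain $X_0^T \to Y_0^T \to Z_0^T$, which is precisely the definition of stochastic degradedness. The two operations I would use are \emph{thinning} (retaining each point of a Poisson process independently with a fixed probability) and \emph{superposition} (adding an independent Poisson process); both map a doubly stochastic Poisson process into another such process, and crucially neither requires access to $X_0^T$.

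First I would thin $Y_0^T$ by retaining each of its arrivals independently with probability $p = A_z/A_y$. Condition \eqref{intensity}, namely $A_y \geq A_z$, is exactly what ensures $p \in (0,1]$, so this is a legitimate thinning. Since, conditioned on $X_0^T$, the process $Y_0^T$ has rate $A_y X_t + \lambda_y$, the thinned process is, conditionally on $X_0^T$, Poisson with rate $p(A_y X_t + \lambda_y) = A_z X_t + (A_z/A_y)\lambda_y$.

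Next I would superimpose on the thinned process an independent homogeneous Poisson process of rate $\lambda_z - (A_z/A_y)\lambda_y$. Condition \eqref{dark}, namely $\lambda_y \leq (A_y/A_z)\lambda_z$, is exactly the requirement that this rate be nonnegative, so such a process exists. By the superposition property the result is, conditioned on $X_0^T$, a Poisson process with rate $A_z X_t + (A_z/A_y)\lambda_y + (\lambda_z - (A_z/A_y)\lambda_y) = A_z X_t + \lambda_z$, which is the law of $\mathcal{P}_0^T(A_z X_0^T + \lambda_z)$, i.e.\ of $Z_0^T$. Because the thinning coins and the added process are drawn independently of $(X_0^T, Y_0^T)$, the composite map sends $Y_0^T$ to a process with the correct conditional law while using no knowledge of $X_0^T$, establishing degradedness.

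The step requiring the most care --- and the natural main obstacle --- is justifying that thinning and superposition genuinely preserve the doubly stochastic Poisson structure in the \emph{conditional} sense: that after conditioning on the realized intensity $X_0^T$, the transformed process still has independent increments and Poisson marginals with the stated rate. This follows from the standard thinning and superposition theorems for inhomogeneous Poisson processes applied pathwise in $X_0^T$, but one must verify the measurability and independence bookkeeping across the auxiliary randomization so that the resulting channel from $Y_0^T$ to $Z_0^T$ is well defined and independent of $X_0^T$.
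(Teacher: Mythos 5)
Your proposal is correct and is essentially the paper's own argument: both construct the degrading channel from $Y_0^T$ to $Z_0^T$ explicitly out of a thinning step (legitimized by $A_y \geq A_z$) and the superposition of an independent homogeneous Poisson process (legitimized by $\lambda_y \leq \frac{A_y}{A_z}\lambda_z$), so that the Markov chain $X_0^T \to Y_0^T \to Z_0^T$ holds by construction. The only difference is the order of the two operations --- the paper first superposes an independent Poisson process of rate $\tilde{\lambda} = \frac{A_y}{A_z}\lambda_z - \lambda_y$ to form $\tilde{Y}_0^T$ and then thins with retention probability $A_z/A_y$, whereas you thin first and then superpose at rate $\lambda_z - \frac{A_z}{A_y}\lambda_y$ --- which is immaterial.
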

\begin{proof}
Let $\tilde{Y}_{0}^{T}$ (cf. Fig. 2) be the process defined as follows
\begin{equation}\label{Ytilda}
\tilde{Y}_{t}=Y_{t}+H_{t},\text{ } t\in[0,T],
\end{equation}
where $H_{t}$ is a homogeneous Poisson process with rate $\tilde{\lambda}=\frac{A_{y}}{A_{z}}\lambda_{z}-\lambda_{y}$ (note that $\tilde{\lambda}\geq0$ by (\ref{dark})) independent of $(X_{0}^{T},Y_{0}^{T})$. It follows that $\tilde{Y}_{0}^{T}$ is a doubly stochastic Poisson process with instantaneous rate $A_{y}X_{t}+\lambda_{y}+\tilde{\lambda}=A_{y}X_{t}+\frac{A_{y}}{A_{z}}\lambda_{z}$. The process $Z_{0}^{T}$ is then obtained from $\tilde{Y}_{0}^{T}$ by thinning with erasure probability $1-\frac{A_{z}}{A_{y}}$ (note that because of (\ref{intensity}) this quantity is $\geq0$).
\end{proof}
In the rest of this paper we will assume that at least one of the inequalities (\ref{intensity}) or (\ref{dark}) is strict. Note that this assumption can be made without losing generality for if there was an equality in (\ref{intensity}) and (\ref{dark}) then the legitimate receiver's channel and the eavesdropper's channel will be identical and the secrecy capacity will be zero.

\begin{figure}[htb]
\begin{center}
\begin{psfrags}
  \psfrag{a}{$X_{0}^{T}$}
  \psfrag{b}{\small{$\mathcal{P}_{0}^{T}(A_{y}X_{0}^{T}+\lambda_{y})$}}
  \psfrag{c}{$Y_{0}^{T}$}
  \psfrag{d}{$\mathcal{P}_{0}^{T}(\tilde{\lambda})$}
  \psfrag{e}{$\tilde{Y}_{0}^{T}$}
  \psfrag{f}{\small{\text{Thinning $(\frac{A_{z}}{A_{y}})$}}}
  \psfrag{g}{$Z_{0}^{T}$}
         \includegraphics{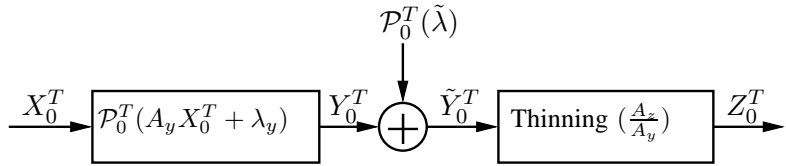}
      \caption{The degraded Poisson wiretap channel}
      \end{psfrags}
\end{center}
\end{figure}

We move now to the description of the information transmission aspect of the problem. The transmitter wishes to
communicate a message $U$ uniformly distributed on $\mathcal{U}=\{1,...,M\}$. An $(M,T)$ code $(E_{T},D_{T})$
for the Poisson wiretap channel is a stochastic encoder $E_{T}$ that maps a message $U$ to a waveform
$X_{0}^{T}$ which satisfies the peak power constraint and a decoder $D_{T}:\mathcal{P}(T)\rightarrow\mathcal{U}$.
The transmission rate of this code is
\[R=\frac{H(U)}{T}=\frac{1}{T}\ln M.\]
The average probability of error at the legitimate receiver is
\begin{equation}
P_{e}=\frac{1}{M}\sum_{m=1}^{M}\Pr\left(D_{T}(Y_{0}^{T})\neq m|U=m\right).
\end{equation}
The level of secrecy in this paper is measured by $\frac{1}{T}I(U;Z_{0}^{T})$. This normalized mutual information quantifies the amount of information about the message $U$
leaked to the eavesdropper. As such our goal is to make this quantity as small as possible.
\begin{definition}
A secrecy rate $R_{s}$ is said to be \textit{achievable} for the Poisson wiretap channel if for all $\epsilon>0$ and all
sufficiently large $T$, there exists an $(M,T)$ code such that
\begin{align}
\frac{\ln M}{T}&\geq R_{s}-\epsilon\nonumber\\
P_{e}&\leq\epsilon\nonumber\\
\frac{1}{T}I(U;Z_{0}^{T})&\leq\epsilon
\end{align}
\end{definition}
The supremum of achievable secrecy rates will be called the \textit{secrecy capacity}. The main result of this paper is
the following.
\begin{theorem} The secrecy capacity of the degraded Poisson wiretap channel is given by\footnote{If $\lambda=0$, the convention is that $0^0=1$.}
\begin{equation}\label{secrecy}
C_{s}=\alpha^{*}(A_{y}-A_{z})+\ln\left(\frac{\lambda_{y}^{\lambda_{y}}}{\lambda_{z}^{\lambda_{z}}}\right)+\ln\left(\frac{(A_{z}\alpha^{*}+\lambda_{z})^{\lambda_{z}}}{(A_{y}\alpha^{*}+\lambda_{y})^{\lambda_{y}}}\right),
\end{equation}
where $\alpha^{*}$ is the unique solution in $[0,1]$ to the following equation
\begin{equation}\label{equation}
\frac{(A_{y}\alpha^{*}+\lambda_{y})^{A_{y}}}{(A_{z}\alpha^{*}+\lambda_{z})^{A_{z}}}=e^{A_{z}-A_{y}}\frac{(A_{y}+\lambda_{y})^{A_{y}+\lambda_{y}}}{(A_{z}+\lambda_{z})^{A_{z}+\lambda_{z}}}\frac{\lambda_{z}^{\lambda_{z}}}{\lambda_{y}^{\lambda_{y}}}.
\end{equation}
\end{theorem}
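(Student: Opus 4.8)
The plan is to establish the secrecy capacity formula by proving matching achievability and converse bounds, both reducing to a single optimization problem over the peak-power-constrained input. Since the channel is degraded (Lemma 1), the secrecy capacity is known to equal $\max_{p_X}\bigl(I(X;Y)-I(X;Z)\bigr)$, so I would first argue that, by the memoryless and infinitely-divisible structure of the Poisson process, the continuous-time maximization decouples into a single-letter problem. Concretely, I expect that for an input process that is constant at level $\alpha$ on $[0,T]$, the normalized mutual informations $\frac{1}{T}I(X;Y)$ and $\frac{1}{T}I(X;Z)$ reduce to Poisson-channel per-second quantities, and that by the concavity/structure arguments of Kabanov and Wyner the optimal input is a two-level (on-off) signaling scheme whose duty cycle is the free parameter. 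The difference $I(X;Y)-I(X;Z)$ then becomes a scalar function of the mean input level, and $C_s$ is its maximum.

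Next I would compute that scalar objective explicitly. Writing the per-unit-time mutual information of a Poisson channel with gain $A$ and dark current $\lambda$ driven by an input with mean $\alpha$ in closed form (using the well-known $x\ln x$-type expression for $I(X;\mathcal{P}(AX+\lambda))$), the secrecy objective takes the form
\begin{align}
f(\alpha)=\phi(A_y,\lambda_y,\alpha)-\phi(A_z,\lambda_z,\alpha),\nonumber
\end{align}
where $\phi$ collects the relevant $(\cdot)\ln(\cdot)$ terms. I would then maximize $f$ over $\alpha\in[0,1]$ by setting $f'(\alpha)=0$. Differentiating the logarithmic expressions and clearing terms, the stationarity condition should rearrange exactly into equation (\ref{equation}); I would verify that $f$ is strictly concave on $[0,1]$ so that the stationary point $\alpha^*$ is the unique interior maximizer, and check the boundary cases to confirm $\alpha^*\in[0,1]$. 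Substituting $\alpha^*$ back into $f$ and simplifying the logarithms using the constraint (\ref{equation}) to eliminate the $A_y$- and $A_z$-exponent terms should collapse $f(\alpha^*)$ into the claimed closed form (\ref{secrecy}).

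For the two halves of the argument, achievability follows by constructing a stochastic wiretap code built from Wyner's structured Poisson codes \cite{Wyner2} operating at input level $\alpha^*$, so that the legitimate rate approaches $I(X;Y)$ while the equivocation guarantees $\frac{1}{T}I(U;Z_0^T)\to 0$; the converse follows from the bounding technique sketched in the introduction, splitting into the thinning case $\frac{\lambda_y}{A_y}=\frac{\lambda_z}{A_z}$ (where $Y=Z+\tilde Z$ with $Z\Perp\tilde Z\mid X$ yields $I(X;Y)-I(X;Z)\le I(X;\tilde Z)$) and the case $\frac{\lambda_y}{A_y}<\frac{\lambda_z}{A_z}$ handled by conditional-expectation estimates.

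The main obstacle I anticipate is twofold. First, justifying the reduction from the continuous-time functional optimization to the scalar problem in $\alpha$ — i.e., showing that a constant (or on-off) input is genuinely optimal for the \emph{difference} $I(X;Y)-I(X;Z)$, not merely for each term separately — requires care, since the two terms are maximized by potentially different inputs and the degradedness must be invoked to control the interaction. Second, the algebraic simplification taking $f(\alpha^*)$ to (\ref{secrecy}) via the implicit relation (\ref{equation}) is delicate: one must use the defining equation to cancel the awkward $A_y,A_z$-exponentiated factors, and I would expect to spend most of the computational effort verifying that these cancellations are exact and that the stationarity condition indeed matches (\ref{equation}) after the logarithmic derivatives are assembled.
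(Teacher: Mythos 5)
Your overall route---Wyner-code achievability with stochastic (subcode) encoding, a converse based on thinning/conditional-expectation bounds, and a final scalar optimization whose stationarity condition rearranges into (\ref{equation})---is the paper's route, and your scalar analysis (strict concavity of the objective under degradedness, stationary point substituted back to give (\ref{secrecy})) matches the paper's Theorem 3. However, there are two genuine gaps in the converse plan. First, you start from ``the secrecy capacity is known to equal $\max_{p_X}(I(X;Y)-I(X;Z))$.'' That identity is known for discrete memoryless channels; for the continuous-time, peak-constrained Poisson channel it is not available off the shelf, and importing it would require Wyner's $\Delta$-discretization, which the paper deliberately avoids. The paper instead proves the operational-to-informational reduction directly in continuous time: Wyner's definition of mutual information for arbitrary ensembles (Lemma 3), a Fano-type bound $R\leq\frac{1}{T(1-P_{e})}\left(I(X_{0}^{T};Y_{0}^{T}|Z_{0}^{T})+I(U;Z_{0}^{T})+H(P_{e})\right)$ (Lemma 4), and Kolmogorov's formula plus the Markov chain $X_{0}^{T}\rightarrow Y_{0}^{T}\rightarrow Z_{0}^{T}$ to identify $I(X_{0}^{T};Y_{0}^{T}|Z_{0}^{T})=I(X_{0}^{T};Y_{0}^{T})-I(X_{0}^{T};Z_{0}^{T})$ (Lemma 5). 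Without this layer your converse has no starting point.

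Second, the single-letterization you flag as the ``main obstacle'' is indeed the crux, and your proposed resolution---show that on-off signaling is genuinely optimal for the difference---is not what the paper does and would be hard to carry out directly, precisely because (as you yourself note) the two mutual informations are maximized by different inputs when $\frac{\lambda_y}{A_y}\neq\frac{\lambda_z}{A_z}$. The paper sidesteps input-optimality altogether: it proves, for \emph{every} admissible input process, the upper bound $I(X_{0}^{T};Y_{0}^{T})-I(X_{0}^{T};Z_{0}^{T})\leq\int_{0}^{T}(\mathbb{E}[K(X_{t})]-K(\mathbb{E}[X_{t}]))dt$ with $K=\phi_{y}-\phi_{z}$, by inserting the intermediate process $\tilde{Y}_{0}^{T}$ (the legitimate signal plus extra dark current) and bounding the two brackets of $[I(X;Y)-I(X;\tilde{Y})]+[I(X;\tilde{Y})-I(X;Z)]$ separately: the first via Kabanov's formula and Jensen's inequality applied to conditional expectations (Lemma 6), the second via the thinning decomposition $\tilde{Y}=Z+\tilde{Z}$ and the sub-additivity $I(X;Z,\tilde{Z})\leq I(X;Z)+I(X;\tilde{Z})$ under conditional independence (Lemma 7). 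Note also that these two techniques are not alternative ``cases'' as in your sketch (and the introduction's informal preview): in the general degraded setting both are applied in composition through $\tilde{Y}$. Only after this pointwise-in-time bound does the convexity of $K$ (which itself requires the degradedness conditions) collapse the problem to your scalar maximization over $\alpha\in[0,1]$, whose value the achievability scheme then matches.
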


This result assumes that $(A_z,\lambda_z)$ is known to the transmitter.
Yet it follows that $C_s$ is an achievable rate with perfect
secrecy even if the eavesdropper observes
$\mathcal{P}_{0}^{T}(A^\prime_z X_{0}^{T}+\lambda_z^\prime)$,
where $A_z^\prime$ and $\lambda_z^\prime$ are unknown but satisfy
$A_z^\prime \le A_z$ and $\lambda_z^\prime \ge \frac{A_z^\prime}{A_z}\lambda_z$. Thus, only
one-sided estimates of $A_z$ and $\lambda_z$ are needed. In practice,
an upper bound on $A_z$ could be provided by guaranteeing that any
potential eavesdropper is more than a certain distance away from
the transmitter. A lower bound on the dark current $\lambda_z$ could
be provided using ambient noise measurements and the known physical
limitations of existing receivers.

\textbf{Worst case scenario:} A particularly insightful case is when $\frac{\lambda_{y}}{A_{y}}=\frac{\lambda_{z}}{A_{z}}=\sigma$. This situation happens when the eavesdropper observes a thinned version of the signal of the legitimate receiver, i.e., $H_{t}\equiv0$ in (\ref{Ytilda}). In this case, after some algebraic manipulations, we obtain that
\begin{equation}
\alpha^{*}=\frac{(1+\sigma)^{1+\sigma}}{e\sigma^{\sigma}}-\sigma,
\end{equation}
and the secrecy capacity reduces to
\begin{equation}
C_{s}=(\lambda_{y}-\lambda_{z})\left(\frac{1}{e}\left(1+\frac{1}{\sigma}\right)^{1+\sigma}-(1+\sigma)\ln\left(1+\frac{1}{\sigma}\right)\right).
\end{equation}
This is saying that the secrecy capacity is the difference between the capacity of the main channel (the channel between the transmitter and the legitimate receiver) and the capacity of the eavesdropper's channel. For instance, in the special case when there is no dark current $\lambda_{y}=\lambda_{z}=0$, we find that $\alpha^{*}=\frac{1}{e}$ and the secrecy capacity reduces to
\begin{equation}
C_{s}=\frac{A_{y}-A_{z}}{e}.
\end{equation}
For a degraded DMC, Wyner \cite{Wyner1} showed that the secrecy capacity is equal to $\max_{p_{X}}(I(X;Y)-I(X;Z))$. Hence the following inequality is always satisfied
\[\text{Secrecy Capacity}\geq C_{M}-C_{W},\]
where $C_{M}$ is the capacity of the main channel and $C_{W}$ is the capacity of the eavesdropper's channel. As shown in \cite{Van}, there is equality in the inequality above if there is an input probability distribution $p_{X}$ that maximizes simultaneously $I(X;Y)$ and $I(X;Z)$. This is exactly what is happening here, when $\frac{\lambda_{y}}{A_{y}}=\frac{\lambda_{z}}{A_{z}}$ the mutual information $I(X_{0}^{T};Y_{0}^{T})$ and $I(X_{0}^{T};Z_{0}^{T})$ are both maximized by letting the input $X_{0}^{T}$ cycle infinitely fast between its extreme values, i.e., the peak power $1$ and $0$ with $\Pr(X_{t}=1)=1-\Pr(X_{t}=0)=\alpha^{*}=\frac{(1+\sigma)^{1+\sigma}}{e\sigma^{\sigma}}-\sigma$.

Before we proceed further with the presentation of the problem considered in this paper, we give a lemma that will prove to be useful in the proofs of the achievability and the converse, a proof of this result can be found for instance in \cite{Kabanov}.
\begin{lemma} The mutual information between the input $X_{0}^{T}$ and the output $\mathcal{P}_{0}^{T}(\alpha
X_{0}^{T}+\lambda)$ can be upper bounded as follows\footnote{Note that some authors use the function $\vartheta(x)=(\alpha x+\lambda)\ln(\alpha x+\lambda)-\lambda\ln\lambda$ instead but the constant term $\lambda\ln\lambda$ cancels out here.}
\begin{equation}
I(X_{0}^{T};\mathcal{P}_{0}^{T}(\alpha
X_{0}^{T}+\lambda))\leq\int_{0}^{T}(\mathbb{E}[\vartheta(X_{t})]-\vartheta(\mathbb{E}[X_{t}]))dt,
\end{equation}
where $\vartheta(x)=(\alpha x+\lambda)\ln(\alpha x+\lambda)$.
\end{lemma}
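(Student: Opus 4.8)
The plan is to start from the exact expression for the mutual information of a doubly stochastic Poisson channel and then reduce the claimed inequality to a single convexity argument. Writing the causal conditional mean of the intensity as $\alpha\langle X_t\rangle_t+\lambda$ (recall $\langle X_t\rangle_t=\mathbb{E}[X_t\,|\,\mathcal{F}_Y^t]$), the identity I would invoke, which is classical and due to Kabanov \cite{Kabanov}, is
\[
I\bigl(X_0^T;\mathcal{P}_0^T(\alpha X_0^T+\lambda)\bigr)=\int_0^T\mathbb{E}\bigl[\vartheta(X_t)-\vartheta(\langle X_t\rangle_t)\bigr]\,dt.
\]
This identity comes from the innovations/martingale representation of the counting process: the output is a point process whose $\mathcal{F}_Y^t$-intensity is the causal filter $\alpha\langle X_t\rangle_t+\lambda$, and the Radon--Nikodym derivative between the conditional law $P_{Y|X}$ and the marginal law $P_Y$ produces, after taking expectations, exactly the integrand above.

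With this identity in hand, the bound follows immediately. First observe that $\vartheta$ is convex on $[0,\infty)$, being the composition of the convex map $u\mapsto u\ln u$ with the affine map $x\mapsto \alpha x+\lambda$. Since $\langle X_t\rangle_t$ is itself a conditional expectation of $X_t$, Jensen's inequality applied to $\vartheta$ together with the tower property yields
\[
\mathbb{E}\bigl[\vartheta(\langle X_t\rangle_t)\bigr]\geq \vartheta\bigl(\mathbb{E}[\langle X_t\rangle_t]\bigr)=\vartheta\bigl(\mathbb{E}[X_t]\bigr).
\]
Subtracting this from $\mathbb{E}[\vartheta(X_t)]$ pointwise in $t$ and integrating over $[0,T]$ gives
\[
I\bigl(X_0^T;\mathcal{P}_0^T(\alpha X_0^T+\lambda)\bigr)\leq\int_0^T\bigl(\mathbb{E}[\vartheta(X_t)]-\vartheta(\mathbb{E}[X_t])\bigr)\,dt,
\]
which is the assertion. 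Intuitively, replacing the causal estimate $\langle X_t\rangle_t$ by the information-free reference $\mathbb{E}[X_t]$ only enlarges the gap, so the resulting bound is the loosest and hence valid; by Jensen the integrand is also nonnegative, consistent with $I\geq 0$.

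The substantive content lies entirely in the exact mutual information identity, so the main obstacle is justifying that identity rather than the convexity step, which is a one-line use of Jensen. Establishing Kabanov's formula requires the filtering theory for point processes --- the innovations representation of the counting measure, the associated likelihood ratio, and the justification for interchanging expectation with the time integral. Here the peak-power constraint $X_t\leq 1$ keeps the intensity bounded and all integrands integrable (with the usual convention $0\ln 0=0$ handling $\lambda=0$ and $X_t=0$), so these technical points go through; because the identity is standard I would cite it directly and present only the convexity reduction as the proof.
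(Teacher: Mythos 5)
Your proof is correct, and it is essentially the argument the paper relies on: the paper itself gives no proof of this lemma, simply citing Kabanov, and your route (Kabanov's exact filtering identity $I=\int_0^T\mathbb{E}[\vartheta(X_t)-\vartheta(\langle X_t\rangle_t)]\,dt$ followed by Jensen's inequality and the tower property to replace $\langle X_t\rangle_t$ by $\mathbb{E}[X_t]$) is exactly the standard derivation behind that citation. It is also the same machinery the paper deploys explicitly in its proof of Lemma 6, where the exact mutual information formula is invoked and a convexity/Jensen step removes the conditioning, so your write-up is consistent with the paper's own treatment.
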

\section{Achievability of $C_{s}$}
Our achievability proof relies on the structured codes that were designed for the Poisson channel by Wyner \cite{Wyner2}.
Before delving into the details of the proof, we will briefly describe the code construction and the properties inherited by this code.

\textbf{Wyner codes $\mathcal{W}(T,M,k)$:} Let $T$, $M$ and $k$ be given, and construct an $M\times{{M}\choose{k}}$ binary matrix $\mathcal{X}$ as follows. The columns of $\mathcal{X}$ are the ${{M}\choose{k}}$ binary $M$-vectors with exactly $k$ ones and $M-k$ zeros.
Now partition the interval $[0,T]$ into ${{M}\choose{k}}$ subintervals of equal length $\varpi_{T}\triangleq\frac{T}{{{M}\choose{k}}}$ and construct $M$ waveforms $\{X_{0}^{T}(m)\}_{m=1}^{M}$ as follows
\begin{equation}
X_{t}(m)=\mathcal{X}(m,n),\text{} t\in((n-1)\varpi_{T},n\varpi_{T}], \text{ } n=1,...,{{M}\choose{k}}.
\end{equation}
For $\alpha=\frac{k}{M}$ fixed, these codes satisfy
\begin{equation}\label{code}
\frac{1}{T}\mu\{t:X_{t}(m)=1\}=\alpha,\quad \text{for all } m,
\end{equation}
with $\mu$ being the Lebesgue measure. If moreover $M=e^{RT}$, for $T>>1$, Wyner showed that for $m\neq m'$
\begin{equation}\label{code2}
\frac{1}{T}\mu\{t:X_{t}(m)=1,X_{t}(m')=0\}\approx\alpha(1-\alpha).
\end{equation}
As such for $T$ large enough the codewords $\{X_{0}^{T}(m)\}_{m=1}^{M}$ will behave as if they were chosen independently.

After this brief overview of Wyner codes we are in a position to state the achievability theorem and prove it.
\begin{theorem}
Any secrecy rate $R_{s}<C_{s}$ is achievable.
\end{theorem}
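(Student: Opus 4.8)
The plan is to combine Wyner's stochastic-binning idea for the wiretap channel with the structured Poisson codes $\mathcal{W}(T,\tilde M,k)$ just described. Fix a duty cycle $\alpha\in[0,1]$ and, for $u\in\{y,z\}$, let $\vartheta_u$ be the function of Lemma 2 for the channel $\mathcal{P}_0^T(A_uX_0^T+\lambda_u)$, namely $\vartheta_u(x)=(A_ux+\lambda_u)\ln(A_ux+\lambda_u)$; set
\[I_u(\alpha)=\alpha\,\vartheta_u(1)+(1-\alpha)\,\vartheta_u(0)-\vartheta_u(\alpha).\]
By Kabanov \cite{Kabanov} and Lemma 2, $I_u(\alpha)$ is the largest per-unit-time mutual information attainable on $\mathcal{P}_0^T(A_uX_0^T+\lambda_u)$ by an input whose on-fraction is $\alpha$, and it is achieved by letting $X_0^T$ cycle infinitely fast. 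I take a Wyner code of total rate $R_1=R_s+R_2$ and duty cycle $\alpha$, partition its $e^{R_1T}$ codewords into $M=e^{R_sT}$ bins of $M'=e^{R_2T}$ codewords each, and let the encoder map a message $U=m$ to the codeword $X_0^T(m,V)$, where $V$ is drawn uniformly from bin $m$.

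Reliability at the legitimate receiver is then immediate: the full code has rate $R_1$ and duty cycle $\alpha$, so if $R_1<I_y(\alpha)$ then Wyner's reliability result for the Poisson channel \cite{Wyner2} gives $P_e\to0$.

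The secrecy analysis is the heart of the matter. Because $X_0^T$ is a deterministic function of $(U,V)$, one has $I(U;Z_0^T)=I(X_0^T;Z_0^T)-I(V;Z_0^T\,|\,U)$. For the first term I apply Lemma 2 to the eavesdropper's channel; the marginal property (\ref{code}) gives $\int_0^T\Pr(X_t=1)\,dt=T\alpha$ exactly, and convexity of $\vartheta_z$ (Jensen along the time axis) then yields the clean bound $I(X_0^T;Z_0^T)\le T\,I_z(\alpha)$. For the second term I observe that, once $U$ is revealed, the $M'$ codewords of the corresponding bin constitute a rate-$R_2$ code on the eavesdropper's channel; since the pairwise-overlap estimate (\ref{code2}) is inherited by every sub-collection of the Wyner code, a union bound over pairwise error events shows this sub-code is reliably decodable whenever $R_2<I_z(\alpha)$, and Fano's inequality gives $I(V;Z_0^T\,|\,U)\ge TR_2-o(T)$. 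Combining, $\tfrac1T I(U;Z_0^T)\le I_z(\alpha)-R_2+o(1)$. Choosing $R_2=I_z(\alpha)-\delta$ forces the leakage below $\delta+o(1)$, while reliability requires only $R_s<I_y(\alpha)-I_z(\alpha)+\delta$.

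Letting $\delta\to0$ and then $T\to\infty$, every $R_s<\max_{\alpha\in[0,1]}\bigl(I_y(\alpha)-I_z(\alpha)\bigr)$ is achievable with vanishing error probability and vanishing leakage. It remains to identify this maximum with $C_s$: differentiating, the stationarity condition $\frac{d}{d\alpha}\bigl(I_y(\alpha)-I_z(\alpha)\bigr)=0$ is, after exponentiation, exactly (\ref{equation}), whose unique root in $[0,1]$ is $\alpha^*$; substituting $\alpha^*$ back into $I_y(\alpha^*)-I_z(\alpha^*)$ and simplifying with the defining relation for $\alpha^*$ reproduces the closed form (\ref{secrecy}). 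I expect the main obstacle to be the lower bound on $I(V;Z_0^T\,|\,U)$: a single bin is only a fixed sub-collection of one large Wyner code, not an independently drawn random code, so one must argue carefully—and uniformly over all bins—that it still behaves like a capacity-achieving code for the eavesdropper, which is precisely where the near-independence guaranteed by (\ref{code2}) is needed.
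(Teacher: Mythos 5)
Your proposal is correct and follows essentially the same route as the paper: a Wyner structured code at duty cycle $\alpha^{*}$ partitioned into bins (stochastic encoding), Lemma 2 to bound $\tfrac{1}{T}I(X_{0}^{T};Z_{0}^{T})$ by $I_{z}(\alpha)$, and Fano's inequality applied to each bin---viewed as a capacity-achieving subcode for the eavesdropper's channel via properties (\ref{code}) and (\ref{code2})---to lower-bound the conditional term, yielding vanishing leakage. The only cosmetic differences are that you optimize over a general duty cycle $\alpha$ at the end (re-deriving that the maximizer satisfies (\ref{equation})) where the paper plugs in $\alpha^{*}$ from the start, and you correctly flag as the main obstacle exactly the point the paper resolves by citing Wyner's pairwise-error analysis.
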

\begin{proof}
Let $\epsilon>0$ be arbitrary and let $R_{s}=C_{s}-\epsilon$. Define
\begin{align}
R_{u}&=\alpha^{*}(A_{u}+\lambda_{u})\ln(A_{u}+\lambda_{u})+(1-\alpha^{*})\lambda_{u}\ln\lambda_{u}\nonumber\\&-(A_{u}\alpha^{*}+\lambda_{u})\ln(A_{u}\alpha^{*}+\lambda_{u}), \text{ }u\in\{y,z\}.
\end{align}
After few algebraic manipulations, we can show that
\[C_{s}=R_{y}-R_{z}.\]
Given these parameters, the encoder-decoder pair considered here works as follows.

\textbf{Encoding:} Let $M=e^{R_{s}T}$ and let $U$ be uniformly distributed on $\mathcal{U}=\{1,...,M\}$. Define $M_{y}=e^{(R_{y}-\frac{3}{2}\epsilon)T}$ and following the steps described above construct a code\footnote{Note that even if $\alpha^{*}$ is not a rational number, it can be approximated arbitrary close by rationals.} $\mathcal{C}=\mathcal{W}(T,M_{y},\alpha^{*}M_{y})$. Partition this code arbitrarily into $M$ smaller subcodes, i.e., $\mathcal{C}=\cup_{i=1}^{M}\mathcal{C}_{i}$. The cardinality of each each subcode $\mathcal{C}_{i}$ will be equal to $M_{z}=\frac{M_{y}}{M}=e^{(R_{z}-\frac{\epsilon}{2})T}$.

The encoder works as follows, when the message $U=m$ is chosen, the codeword $X_{0}^{T}$ is selected uniformly randomly from $\mathcal{C}_{m}$.

\textbf{Decoding:} The decoder considered here is the maximum likelihood decoder constructed by Wyner \cite{Wyner2}. After observing $Y_{0}^{T}$, the decoder at the legitimate receiver computes the following metric
\begin{equation}
\Psi_{m}=\int_{S_{m}}dY_{t},
\end{equation}
where $S_{m}=\{t\in[0,T]:\text{ }X_{t}(m)=1\}$. Then $D_{T}(Y_{0}^{T})=m$ if $m$ maximizes $\Psi_{m}$, with ties resolved in favor of the smallest $m$.

\textbf{Analysis of $P_{e}$:} The fact that $P_{e}\rightarrow0$, follows simply from the fact that Wyner codes with the peak power $1$ and average power $\alpha^{*}$ are capacity achieving.

\textbf{Analysis of $\frac{1}{T}I(U;Z_{0}^{T})$:}

Notice first that for each $m$, the waveform $X_{0}^{T}(m)$ is piecewise constant. It follows that a sufficient statistic for making a decision is the number of arrivals during each subinterval $((n-1)\varpi_{T},n\varpi_{T}]$, i.e., $Z_{n}=Z_{n\varpi_{T}}-Z_{(n-1)\varpi_{T}}$, $n=1,...,N_{y}$ with $N_{y}={{M_{y}}\choose{\alpha^{*}M_{y}}}$.
Consequently,
\begin{align}
I(X_{0}^{T};Z_{0}^{T})&=I(\mathbf{X};\mathbf{Z})\nonumber\\
I(U;Z_{0}^{T})&=I(U;\mathbf{Z})
\end{align}
where $\mathbf{X}=[X_{1},...,X_{N_{y}}]$, $X_{i}=0$ or $1$ depending on the choice of the codeword and $\mathbf{Z}=[Z_{1},...,Z_{N_{y}}]$. The equalities above follows from the fact that $\mathbf{Z}$ is a sufficient statistic.

As a result of Lemma 2, we have
\begin{equation}
\frac{1}{T}I(X_{0}^{T};Z_{0}^{T})\leq\frac{1}{T}\int_{0}^{T}(\mathbb{E}[\phi_{z}(X_{t})]-\phi_{z}(\mathbb{E}[X_{t}]))dt,
\end{equation}
where $\phi_{z}(x)=(A_{z}x+\lambda_{z})\ln(A_{z}x+\lambda_{z})$. Because of the uniform choice in the encoding scheme and in view of (\ref{code}) we must have that $\Pr[X_{t}=1]=1-\Pr[X_{t}=0]=\alpha^{*}$, hence we have
\begin{align}
\mathbb{E}[\phi_{z}(X_{t})]&=\alpha^{*}\phi_{z}(1)+(1-\alpha^{*})\phi_{z}(0)\nonumber\\
&=\alpha^{*}(A_{z}+\lambda_{z})\ln(A_{z}+\lambda_{z})+(1-\alpha^{*})\lambda_{z}\ln\lambda_{z}.
\end{align}
and
\begin{equation}
\phi_{z}(\mathbb{E}[X_{t}])=\phi_{z}(\alpha^{*})=(A_{z}\alpha^{*}+\lambda_{z})\ln(A_{z}\alpha^{*}+\lambda_{z}).
\end{equation}
Consequently, we deduce that
\begin{equation}\label{ineq1}
\frac{1}{T}I(\mathbf{X};\mathbf{Z})=\frac{1}{T}I(X_{0}^{T};Z_{0}^{T})\leq R_{z}.
\end{equation}
Notice that every subcode $\mathcal{C}_{m}$ can be viewed as a code for the eavesdropper's channel with $M_{z}$ codewords and uniform prior distribution. Define $\delta_{m}$ to be the probability of error for code $\mathcal{C}_{m}$ ($1\leq m\leq M$) with the (optimal) decoder described above and let $\delta=\frac{1}{M}\sum_{m=1}^{M}\delta_{m}$. From Fano's inequality we have
\begin{equation}
H(\mathbf{X}|\mathbf{Z},U=m)\leq H(\delta_{m})+\delta_{m}\ln M_{z},
\end{equation}
where $H(p)=-p\ln(p)-(1-p)\ln(1-p)$ is the binary entropy.

Since the codewords are uniformly distributed in each subcode, we deduce that $H(\mathbf{X}|U=m)=\ln M_{z}$. We conclude therefore that
\begin{align}
I(\mathbf{X};\mathbf{Z}|U=m)&=H(\mathbf{X}|U=m)-H(\mathbf{X}|\mathbf{Z},U=m)\nonumber\\
&\geq \ln M_{z}-(H(\delta_{m})+\delta_{m}\ln M_{z}).
\end{align}
Averaging over $U$ and by using the concavity of $H(\cdot)$ we find that
\begin{equation}
I(\mathbf{X};\mathbf{Z}|U)\geq\ln M_{z}-(H(\delta)+\delta\ln M_{z}).
\end{equation}
Notice also that $U\rightarrow \mathbf{X}\rightarrow \mathbf{Z}$ form a Markov chain, i.e.,
\begin{align}
\frac{1}{T}I(U;\mathbf{Z})&=\frac{1}{T}I(U,\mathbf{X};\mathbf{Z})-\frac{1}{T}I(\mathbf{X};\mathbf{Z}|U)\nonumber\\
&=\frac{1}{T}I(\mathbf{X};\mathbf{Z})-\frac{1}{T}I(\mathbf{X};\mathbf{Z}|U)
\end{align}
Combined with the last inequality this implies that
\begin{equation}\label{ineq2}
\frac{1}{T}I(U;\mathbf{Z})\leq\frac{1}{T}I(\mathbf{X};\mathbf{Z})-\frac{1}{T}\ln M_{z}+\frac{1}{T}(H(\delta)+\delta\ln M_{z}).
\end{equation}
Inequalities (\ref{ineq1}) and (\ref{ineq2}) result in the following
\begin{equation}\label{ineq6}
\frac{1}{T}I(U;\mathbf{Z})\leq R_{z}-\frac{1}{T}[\ln M_{z}-(H(\delta)+\delta\ln M_{z})].
\end{equation}
As $M_{z}=e^{(R_{z}-\frac{\epsilon}{2})T}$, this gives
\begin{equation}
\frac{1}{T}I(U;Z_{0}^{T})=\frac{1}{T}I(U;\mathbf{Z})\leq\frac{\epsilon}{2}+\frac{1}{T}H(\delta)+\delta(R_{z}-\frac{\epsilon}{2}).
\end{equation}
By the code construction described above, the codewords of every subcode $\mathcal{C}_{m}$ ($1\leq m\leq M$) satisfy (\ref{code}) and (\ref{code2}) (with $\alpha$ replaced by $\alpha^{*}$). These two conditions dictate the pairwise error probability of the codewords in $\mathcal{C}_{m}$ \cite{Wyner2}. Since the overall error probability of the code $\mathcal{C}_{m}$ is governed by the pairwise error probability \cite{Wyner2}, it follows that every subcode $\mathcal{C}_{m}$ is capacity achieving for the eavesdropper's channel and as such $\delta_{m}$ for $m=1,...,M$ can be made arbitrarily small. Hence, by choosing $T$ large enough, we can enforce that $\frac{1}{T}H(\delta)+\delta(R_{z}-\frac{\epsilon}{2})\leq\frac{\epsilon}{2}$. The previous inequality shows therefore that $\frac{1}{T}I(U;Z_{0}^{T})\leq\epsilon$ and the desired secrecy condition is satisfied.

This shows that any secrecy rate $R_{s}<C_{s}$ can be achieved and completes the achievability proof.
\end{proof}
\section{The Converse for the secrecy capacity}
Before delving into the details of the converse we need the following technical lemma due to Wyner \cite{Wyner3}.
\begin{lemma}[Wyner \cite{Wyner3}] If $\Gamma:\Omega\rightarrow\digamma$ is a random variable such that $\digamma$ is a finite set and $\Lambda_{0}^{T}=\{\Lambda_{t},0\leq t\leq T\}$ is a given stochastic process, then we have
\begin{equation}
I(\Gamma;\Lambda_{0}^{T})=H(\Gamma)-H(\Gamma|\Lambda_{0}^{T}),
\end{equation}
where $H(\Gamma)$ is the usual entropy for discrete random variables and
\begin{equation}
H(\Gamma|\Lambda_{0}^{T})=-\mathbb{E}\left[\sum_{\gamma\in\digamma}\Pr[\Gamma=\gamma|\mathcal{F}^{T}_{\Lambda}]\ln\Pr[\Gamma=\gamma|\mathcal{F}^{T}_{\Lambda}]\right].
\end{equation}
\end{lemma}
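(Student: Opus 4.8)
The plan is to work directly from the measure-theoretic definition of mutual information, exploiting the fact that $\Gamma$ takes only finitely many values. Without loss of generality I would assume $\Pr[\Gamma=\gamma]>0$ for every $\gamma\in\digamma$ (the null values contribute nothing). Write $P_{\gamma}=\Pr[\Gamma=\gamma]$, let $\mu_{\gamma}$ denote the conditional law of the path $\Lambda_{0}^{T}$ given $\{\Gamma=\gamma\}$ on the path space carrying $\mathcal{F}^{T}_{\Lambda}$, and let $\mu=\sum_{\gamma}P_{\gamma}\mu_{\gamma}$ be the unconditional law of $\Lambda_{0}^{T}$. Since each $P_{\gamma}>0$ we have $\mu_{\gamma}\ll\mu$; write $f_{\gamma}=d\mu_{\gamma}/d\mu$. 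The Gelfand--Yaglom--Dobrushin definition then gives
\begin{equation}
I(\Gamma;\Lambda_{0}^{T})=\sum_{\gamma}P_{\gamma}\int f_{\gamma}\ln f_{\gamma}\,d\mu.
\end{equation}

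The first step is to identify the posterior. Because $\sum_{\gamma}P_{\gamma}f_{\gamma}=d\mu/d\mu=1$ ($\mu$-a.e.), Bayes' rule in this abstract setting yields
\begin{equation}
\Pr[\Gamma=\gamma\mid\mathcal{F}^{T}_{\Lambda}]=P_{\gamma}f_{\gamma}(\Lambda_{0}^{T})\quad\text{a.s.},
\end{equation}
which I would verify by checking the defining averaging property against indicators of sets in $\mathcal{F}^{T}_{\Lambda}$. The second step is a direct computation of the right-hand side. Abbreviating $\pi_{\gamma}=\Pr[\Gamma=\gamma\mid\mathcal{F}^{T}_{\Lambda}]$ and substituting $\pi_{\gamma}=P_{\gamma}f_{\gamma}$, I would expand
\begin{equation}
\mathbb{E}\Big[\sum_{\gamma}\pi_{\gamma}\ln\pi_{\gamma}\Big]=\int\sum_{\gamma}P_{\gamma}f_{\gamma}\big(\ln P_{\gamma}+\ln f_{\gamma}\big)\,d\mu,
\end{equation}
and use $\int f_{\gamma}\,d\mu=1$ to split this into $\sum_{\gamma}P_{\gamma}\ln P_{\gamma}+I(\Gamma;\Lambda_{0}^{T})$. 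Since $H(\Gamma)=-\sum_{\gamma}P_{\gamma}\ln P_{\gamma}$ and $H(\Gamma\mid\Lambda_{0}^{T})=-\mathbb{E}[\sum_{\gamma}\pi_{\gamma}\ln\pi_{\gamma}]$, the two $\sum_{\gamma}P_{\gamma}\ln P_{\gamma}$ terms cancel and the identity $I(\Gamma;\Lambda_{0}^{T})=H(\Gamma)-H(\Gamma\mid\Lambda_{0}^{T})$ drops out.

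I expect the main obstacle to be purely measure-theoretic rather than computational: one must ensure the objects above are well defined for a general continuous-time process. This requires $\mathcal{F}^{T}_{\Lambda}$ to be countably generated, so that the regular conditional probabilities $\Pr[\Gamma=\gamma\mid\mathcal{F}^{T}_{\Lambda}]$ and the densities $f_{\gamma}$ exist; this holds under the usual separability assumptions on $\Lambda_{0}^{T}$ and is automatic for the Poisson observation processes of interest here. A fully elementary alternative, avoiding Radon--Nikodym derivatives, is to approximate $\mathcal{F}^{T}_{\Lambda}$ by an increasing sequence of finite sub-$\sigma$-algebras $\mathcal{G}_{n}\uparrow\mathcal{F}^{T}_{\Lambda}$: on each $\mathcal{G}_{n}$ the conditioning variable is discrete, the identity is the elementary finite-alphabet chain rule, and one passes to the limit using L\'evy's martingale convergence theorem for $\Pr[\Gamma=\gamma\mid\mathcal{G}_{n}]\to\Pr[\Gamma=\gamma\mid\mathcal{F}^{T}_{\Lambda}]$, together with the boundedness of $x\mapsto-x\ln x$ on $[0,1]$ (which makes bounded convergence applicable to the conditional entropies) and the monotone-refinement characterization of mutual information (which gives $I(\Gamma;\Lambda^{(n)})\uparrow I(\Gamma;\Lambda_{0}^{T})$ for the discrete approximants $\Lambda^{(n)}$ generating $\mathcal{G}_{n}$).
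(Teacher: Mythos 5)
The paper never proves this lemma: it is stated as a quotation of Wyner's 1978 result \cite{Wyner3} and used as a black box (the text merely remarks that it extends the familiar discrete identity to continuous-time observations), so there is no in-paper argument to compare against, and your proposal supplies a proof the authors deliberately omitted. On its own merits, your argument is correct. The Bayes identification $\Pr[\Gamma=\gamma\mid\mathcal{F}_{\Lambda}^{T}]=P_{\gamma}f_{\gamma}(\Lambda_{0}^{T})$ does verify against indicators (every set in $\mathcal{F}_{\Lambda}^{T}$ is of the form $\{\Lambda_{0}^{T}\in A\}$ for a path-space measurable $A$, and $\mathbb{E}[\mathbf{1}_{A}(\Lambda)P_{\gamma}f_{\gamma}(\Lambda)]=P_{\gamma}\mu_{\gamma}(A)$); the normalization $\sum_{\gamma}P_{\gamma}f_{\gamma}=1$ $\mu$-a.e.\ follows from $\mu=\sum_{\gamma}P_{\gamma}\mu_{\gamma}$; and the splitting of the integral is legitimate because $f_{\gamma}\leq 1/P_{\gamma}$ $\mu$-a.e., so every term is finite (this also recovers $I(\Gamma;\Lambda_{0}^{T})\leq H(\Gamma)<\infty$).

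Two remarks. First, the one bridge you invoke only implicitly is actually the crux relative to this paper's framework: here, and in \cite{Wyner3}, mutual information for arbitrary ensembles is \emph{defined} as a supremum over finite quantizations (this is exactly how it is deployed in the proof of Lemma 7), whereas you start from the Gelfand--Yaglom--Dobrushin integral formula. The equality of the two is a theorem, not a definition; it applies here because finiteness of $\digamma$ and $P_{\gamma}>0$ force $\mu_{\gamma}\ll\mu$, hence absolute continuity of the joint law with respect to the product law. You should either cite that equivalence explicitly or promote your ``fully elementary alternative'' to the main argument, since it works directly with the partition definition: conditional Jensen applied to $x\mapsto -x\ln x$ gives $H(\Gamma\mid D)\geq H(\Gamma\mid\Lambda_{0}^{T})$ for every quantization $D$, and martingale convergence gives the matching lower bound on the supremum. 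Second, your measure-theoretic worry is misplaced for the main argument: the conditional probabilities $\Pr[\Gamma=\gamma\mid\mathcal{F}_{\Lambda}^{T}]$ for the finitely many $\gamma$, and the densities $f_{\gamma}=d\mu_{\gamma}/d\mu$, exist by Radon--Nikodym alone, with no separability or countable-generation hypothesis; countable generation is needed only for the sequential approximation $\mathcal{G}_{n}\uparrow\mathcal{F}_{\Lambda}^{T}$ in your alternative (and even that can be avoided by running the martingale argument over the directed net of finite sub-$\sigma$-algebras).
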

This lemma is standard when all the random variables have discrete alphabets however this extension is needed in
this paper since we are dealing with continuous time stochastic processes.

The converse theorem will be proved through a sequence of Lemmas. The first one gives an inequality
that must satisfied by every encoder-decoder pair $(E_{T},D_{T})$.
\begin{lemma}
For every $(M,T)$ code with rate $R=\frac{\ln M}{T}$ we have
\begin{equation}
R\leq \frac{1}{T(1-P_{e})}\left(I(X_{0}^{T};Y_{0}^{T}|Z_{0}^{T})+I(U;Z_{0}^{T})+H(P_{e})\right).
\end{equation}
\end{lemma}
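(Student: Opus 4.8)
The plan is to combine Fano's inequality with the chain rule and the data-processing inequality for mutual information, exploiting the Markov structure $U\to X_0^T\to(Y_0^T,Z_0^T)$ that the stochastic encoder and the channel impose. Throughout, every conditional entropy and (conditional) mutual information involving the continuous-time observations $Y_0^T$ and $Z_0^T$ is to be read in the sense made precise by Lemma 3, which guarantees that the standard identities remain valid in this setting.

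First I would use that $U$ is uniform on $\{1,\ldots,M\}$, so that $H(U)=\ln M=RT$, and split
\[
RT=H(U)=I(U;Y_0^T)+H(U|Y_0^T).
\]
The legitimate decoder outputs $\hat U=D_T(Y_0^T)$, a deterministic function of $Y_0^T$ that disagrees with $U$ with probability $P_e$. Since $U$ is discrete and finite-valued, Fano's inequality applied to $(U,\hat U)$ together with the bound $H(U|Y_0^T)\le H(U|\hat U)$ gives
\[
H(U|Y_0^T)\le H(P_e)+P_e\ln M=H(P_e)+P_eRT,
\]
and rearranging yields $RT(1-P_e)\le I(U;Y_0^T)+H(P_e)$.

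Next I would bound $I(U;Y_0^T)$ by the two terms on the right-hand side of the claim. Data processing gives $I(U;Y_0^T)\le I(U;Y_0^T,Z_0^T)$, and the chain rule gives $I(U;Y_0^T,Z_0^T)=I(U;Z_0^T)+I(U;Y_0^T|Z_0^T)$. It then remains to show $I(U;Y_0^T|Z_0^T)\le I(X_0^T;Y_0^T|Z_0^T)$. For this I would expand $I(U,X_0^T;Y_0^T|Z_0^T)$ in two ways: it equals $I(U;Y_0^T|Z_0^T)+I(X_0^T;Y_0^T|Z_0^T,U)$, which is at least $I(U;Y_0^T|Z_0^T)$, and it also equals $I(X_0^T;Y_0^T|Z_0^T)+I(U;Y_0^T|Z_0^T,X_0^T)$. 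The Markov chain $U\to X_0^T\to(Y_0^T,Z_0^T)$, valid because the channel outputs depend on $U$ only through the transmitted waveform $X_0^T$, forces the last term to vanish, so $I(U;Y_0^T|Z_0^T)\le I(X_0^T;Y_0^T|Z_0^T)$. Chaining the inequalities produces $I(U;Y_0^T)\le I(U;Z_0^T)+I(X_0^T;Y_0^T|Z_0^T)$, and dividing the earlier bound $RT(1-P_e)\le I(U;Y_0^T)+H(P_e)$ by $T(1-P_e)$ gives exactly the asserted inequality.

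The routine combinatorics of the chain rule is not where the difficulty lies; the main obstacle is to justify these steps rigorously in continuous time. One must ensure that the conditional entropies and conditional mutual informations of the doubly stochastic processes $Y_0^T$ and $Z_0^T$ are the measure-theoretic quantities licensed by Lemma 3, and that the chain rule, the data-processing inequality, and in particular the vanishing of $I(U;Y_0^T|Z_0^T,X_0^T)$ under the Markov property all transfer from the finite-alphabet setting to these continuous-time objects.
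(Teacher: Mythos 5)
Your proof is correct, and its skeleton matches the paper's: Fano's inequality to handle $H(U|Y_0^T)$, then a Kolmogorov chain-rule decomposition bringing in $Z_0^T$, then a conditional data-processing step to replace $U$ by $X_0^T$. There are two small but genuine differences. First, where you invoke the generic monotonicity bound $I(U;Y_0^T)\leq I(U;Y_0^T,Z_0^T)$, the paper instead derives the \emph{equality} $I(U;Y_0^T)=I(U;Y_0^T,Z_0^T)$ by noting that degradedness gives the Markov chain $U\rightarrow Y_0^T\rightarrow Z_0^T$, hence $I(U;Z_0^T|Y_0^T)=0$; since only the inequality is ever used afterwards, your route is leaner and, notably, does not rely on degradedness at all --- this lemma in your form would hold for a non-degraded wiretap channel as well. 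Second, for the step $I(U;Y_0^T|Z_0^T)\leq I(X_0^T;Y_0^T|Z_0^T)$ the paper simply cites the Markov chain $U\rightarrow X_0^T\rightarrow Y_0^T\rightarrow Z_0^T$, whereas you supply the standard proof by expanding $I(U,X_0^T;Y_0^T|Z_0^T)$ in two ways and killing $I(U;Y_0^T|Z_0^T,X_0^T)$ via the (weaker) Markov condition $U\rightarrow X_0^T\rightarrow(Y_0^T,Z_0^T)$; this is more explicit than the paper, though in the continuous-time setting one should, as you acknowledge, verify that this double expansion of conditional mutual information is licensed by Wyner's framework (it is --- Kolmogorov's formula in \cite{Wyner3} applies to arbitrary ensembles), rather than leave it as a caveat.
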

\begin{proof}
Let $\hat{U}=D_{T}(Y_{0}^{T})$ denote the output of the decoder at the legitimate receiver, so that $P_{e}=\Pr(U\neq\hat{U})$.
We then have the following sequence of identities
\begin{align}
RT=\ln M=H(U)&\stackrel{(a)}{=}H(U|Y_{0}^{T})+I(U;Y_{0}^{T})\nonumber\\
&\stackrel{(b)}{\leq} H(U|\hat{U})+I(U;Y_{0}^{T})\nonumber\\
&\stackrel{(c)}{\leq} H(P_{e})+P_{e}\ln M+I(U;Y_{0}^{T}),
\end{align}
the equality $(a)$ follows from Wyner's lemma and the inequality $(c)$ is an application of Fano's inequality. For the inequality $(b)$, since $U\rightarrow Y_{0}^{T}\rightarrow\hat{U}$ is a Markov chain we deduce that\footnote{The data processing inequality extends to arbitrary random variables, see for instance Theorem 3.4 in \cite{Wyner3}.} $I(U,Y_{0}^{T})\geq I(U,\hat{U})$. Now by invoking Wyner's lemma again it follows that $H(U|Y_{0}^{T})\leq H(U|\hat{U})$.

From Kolmogorov's formula (see Lemma 3.2 in \cite{Wyner3}) we have\footnote{The definition of the conditional mutual information for arbitrary random variables can be found in \cite{Wyner3}.}
\begin{equation}
I(U;Y_{0}^{T},Z_{0}^{T})=I(U;Y_{0}^{T})+I(U;Z_{0}^{T}|Y_{0}^{T}),
\end{equation}
since $U\rightarrow Y_{0}^{T}\rightarrow Z_{0}^{T}$ is a Markov chain we deduce that\footnote{Refer to Lemma 3.1. in \cite{Wyner3}.} $I(U;Z_{0}^{T}|Y_{0}^{T})=0$. By applying Kolmogorov's formula again we obtain
\begin{align}
I(U;Y_{0}^{T})=I(U;Y_{0}^{T},Z_{0}^{T})&=I(U;Z_{0}^{T})+I(U;Y_{0}^{T}|Z_{0}^{T})\nonumber\\
&\leq I(U;Z_{0}^{T})+I( X_{0}^{T};Y_{0}^{T}|Z_{0}^{T}),
\end{align}
where the last inequality follows from the fact that $U\rightarrow X_{0}^{T}\rightarrow Y_{0}^{T}\rightarrow
Z_{0}^{T}$ form a Markov chain. Combining this last inequality with $(c)$ and rearranging the terms yields the
desired inequality.
\end{proof}
\begin{lemma} If $\mathbb{E}\int_{0}^{T}|X_{t}\ln X_{t}|<\infty$, then
\begin{equation}
I( X_{0}^{T};Y_{0}^{T}|Z_{0}^{T})=I(X_{0}^{T};Y_{0}^{T})-I( X_{0}^{T};Z_{0}^{T})
\end{equation}
\end{lemma}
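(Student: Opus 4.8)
The plan is to obtain the identity from Kolmogorov's chain rule for the mutual information of arbitrary random variables, using the Markov structure $X_{0}^{T}\rightarrow Y_{0}^{T}\rightarrow Z_{0}^{T}$ to annihilate one of the conditional terms, and to invoke the hypothesis only to guarantee that the two \emph{unconditional} mutual informations are finite, so that their difference is well defined. The chain rule and the vanishing-conditional-information fact are precisely the tools already imported from Wyner \cite{Wyner3} in the proof of Lemma 5, so the real content of the argument is the finiteness bookkeeping.

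First I would establish that both $I(X_{0}^{T};Y_{0}^{T})$ and $I(X_{0}^{T};Z_{0}^{T})$ are finite. Applying Lemma 2 with $(\alpha,\lambda)=(A_{y},\lambda_{y})$, and then with $(A_{z},\lambda_{z})$, gives
\[I(X_{0}^{T};Y_{0}^{T})\leq\int_{0}^{T}\left(\mathbb{E}[\vartheta_{y}(X_{t})]-\vartheta_{y}(\mathbb{E}[X_{t}])\right)dt,\]
and the analogous bound for $Z_{0}^{T}$, where $\vartheta_{u}(x)=(A_{u}x+\lambda_{u})\ln(A_{u}x+\lambda_{u})$ for $u\in\{y,z\}$. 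Since for $x\geq0$ one has a crude bound of the form $|\vartheta_{u}(x)|\leq c_{u}(1+x+x|\ln x|)$ with $c_{u}$ depending only on $(A_{u},\lambda_{u})$, and since $x\leq1+x|\ln x|$ for $x\geq1$, the hypothesis $\mathbb{E}\int_{0}^{T}|X_{t}\ln X_{t}|\,dt<\infty$ forces $\mathbb{E}\int_{0}^{T}|\vartheta_{u}(X_{t})|\,dt<\infty$. Hence both right-hand integrals are finite and both mutual informations are finite.

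Next I would apply Kolmogorov's formula (Lemma 3.2 of \cite{Wyner3}) to the pair $(Y_{0}^{T},Z_{0}^{T})$ in its two orderings:
\[I(X_{0}^{T};Y_{0}^{T},Z_{0}^{T})=I(X_{0}^{T};Z_{0}^{T})+I(X_{0}^{T};Y_{0}^{T}|Z_{0}^{T})=I(X_{0}^{T};Y_{0}^{T})+I(X_{0}^{T};Z_{0}^{T}|Y_{0}^{T}).\]
Because $X_{0}^{T}\rightarrow Y_{0}^{T}\rightarrow Z_{0}^{T}$ is a Markov chain (the degradedness from Lemma 1), we have $I(X_{0}^{T};Z_{0}^{T}|Y_{0}^{T})=0$ by Lemma 3.1 of \cite{Wyner3}. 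Equating the two expansions yields $I(X_{0}^{T};Z_{0}^{T})+I(X_{0}^{T};Y_{0}^{T}|Z_{0}^{T})=I(X_{0}^{T};Y_{0}^{T})$, and subtracting the finite quantity $I(X_{0}^{T};Z_{0}^{T})$ from both sides produces the claimed identity.

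The main obstacle is exactly the finiteness step: without the hypothesis, the right-hand side $I(X_{0}^{T};Y_{0}^{T})-I(X_{0}^{T};Z_{0}^{T})$ could be an indeterminate $\infty-\infty$, and the rearrangement of Kolmogorov's formula would be illegitimate. The chain rule itself and the vanishing of $I(X_{0}^{T};Z_{0}^{T}|Y_{0}^{T})$ are routine once one is willing to cite the general (arbitrary-alphabet) mutual-information machinery of \cite{Wyner3}; the only genuine work is verifying, through Lemma 2, that the integrability of $X_{t}\ln X_{t}$ propagates to the Kabanov functional $\vartheta_{u}$ and thereby bounds both mutual informations.
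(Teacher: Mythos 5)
Your proposal is correct and follows essentially the same route as the paper: Kolmogorov's formula applied in both orderings, the Markov chain $X_{0}^{T}\rightarrow Y_{0}^{T}\rightarrow Z_{0}^{T}$ forcing $I(X_{0}^{T};Z_{0}^{T}|Y_{0}^{T})=0$, and the integrability hypothesis invoked only to make the subtraction of $I(X_{0}^{T};Z_{0}^{T})$ legitimate. The paper's proof is terser on the finiteness step (it notes only that $I(X_{0}^{T};Z_{0}^{T})<\infty$, which is all that is needed to rearrange), but your more detailed bookkeeping via Lemma 2 is the same idea spelled out.
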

\begin{proof}
Applying Kolmogorov's formula twice gives
\begin{align}
I( X_{0}^{T};Y_{0}^{T},Z_{0}^{T})&=I( X_{0}^{T};Z_{0}^{T})+I( X_{0}^{T};Y_{0}^{T}|Z_{0}^{T})\nonumber\\
&=I( X_{0}^{T};Y_{0}^{T})+I( X_{0}^{T};Z_{0}^{T}|Y_{0}^{T}).
\end{align}
Since $X_{0}^{T}\rightarrow Y_{0}^{T}\rightarrow Z_{0}^{T}$ form a Markov chain, we have $I( X_{0}^{T};Z_{0}^{T}|Y_{0}^{T})=0$. Consequently we deduce that
\begin{equation}
I( X_{0}^{T};Y_{0}^{T})=I( X_{0}^{T};Z_{0}^{T})+I( X_{0}^{T};Y_{0}^{T}|Z_{0}^{T}).
\end{equation}
The condition $\mathbb{E}\int_{0}^{T}|X_{t}\ln X_{t}|<\infty$ implies that $I( X_{0}^{T};Z_{0}^{T})<\infty$ and it follows that $I( X_{0}^{T};Y_{0}^{T})-I( X_{0}^{T};Z_{0}^{T})=I( X_{0}^{T};Y_{0}^{T}|Z_{0}^{T})$.
\end{proof}

The goal of the upcoming lemmas is to prove that $I(X_{0}^{T};Y_{0}^{T}|Z_{0}^{T})\leq T C_{s}$, where $C_{s}$ is given by (\ref{secrecy}). We first decompose $I(X_{0}^{T};Y_{0}^{T}|Z_{0}^{T})$ as follows
\begin{align}\label{decomposition}
I(X_{0}^{T};Y_{0}^{T}|Z_{0}^{T})&=I(X_{0}^{T};Y_{0}^{T})-I( X_{0}^{T};Z_{0}^{T})\nonumber\\
&=I(X_{0}^{T};Y_{0}^{T})-I(X_{0}^{T};\tilde{Y}_{0}^{T})\nonumber\\&+I(X_{0}^{T};\tilde{Y}_{0}^{T})-I(X_{0}^{T};Z_{0}^{T}),
\end{align}
where $\tilde{Y}_{0}^{T}$ has been defined in (\ref{Ytilda}). The next two lemmas will provide upper bounds on $I(X_{0}^{T};Y_{0}^{T})-I(X_{0}^{T};\tilde{Y}_{0}^{T})$ and $I(X_{0}^{T};\tilde{Y}_{0}^{T})-I(X_{0}^{T};Z_{0}^{T})$.

\begin{lemma} If $\mathbb{E}\int_{0}^{T}|X_{t}\ln X_{t}|<\infty$, then
\begin{align}
&I(X_{0}^{T};Y_{0}^{T})-I(X_{0}^{T};\tilde{Y}_{0}^{T})\leq\nonumber\\&\int_{0}^{T}\left(\frac{A_{y}}{A_{z}}(\phi_{z}(\mathbb{E}[X_{t}])-\mathbb{E}[\phi_{z}(X_{t})])-
(\phi_{y}(\mathbb{E}[X_{t}])-\mathbb{E}[\phi_{y}(X_{t})])\right)dt,
\end{align}
where $\phi_{y}(x)=(A_{y}x+\lambda_{y})\ln(A_{y}x+\lambda_{y})$ and $\phi_{z}(x)$ has been defined above analogously.
\end{lemma}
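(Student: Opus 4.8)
The plan is to compute both mutual informations exactly via the causal filtering representation of the Poisson--channel mutual information and then reduce the claimed bound to a pointwise comparison of two Jensen gaps. Write $\phi_{\tilde{y}}(x)=(A_{y}x+\frac{A_{y}}{A_{z}}\lambda_{z})\ln(A_{y}x+\frac{A_{y}}{A_{z}}\lambda_{z})$ for the rate function of $\tilde{Y}_{0}^{T}$, and let $\langle X_{t}\rangle_{t}^{y}=\mathbb{E}[X_{t}\,|\,\mathcal{F}_{Y}^{t}]$ and $\langle X_{t}\rangle_{t}^{\tilde{y}}=\mathbb{E}[X_{t}\,|\,\mathcal{F}_{\tilde{Y}}^{t}]$ denote the two causal conditional means. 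The exact (causal) refinement of Lemma 2 \cite{Kabanov} gives
\begin{align}
I(X_{0}^{T};Y_{0}^{T})&=\int_{0}^{T}\mathbb{E}\left[\phi_{y}(X_{t})-\phi_{y}(\langle X_{t}\rangle_{t}^{y})\right]dt,\nonumber\\
I(X_{0}^{T};\tilde{Y}_{0}^{T})&=\int_{0}^{T}\mathbb{E}\left[\phi_{\tilde{y}}(X_{t})-\phi_{\tilde{y}}(\langle X_{t}\rangle_{t}^{\tilde{y}})\right]dt,\nonumber
\end{align}
the hypothesis $\mathbb{E}\int_{0}^{T}|X_{t}\ln X_{t}|\,dt<\infty$ guaranteeing finiteness. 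Since $\phi_{\tilde{y}}=\frac{A_{y}}{A_{z}}\phi_{z}+(\text{affine})$ and an affine function contributes no Jensen gap, the right-hand side of the lemma is exactly the difference of the two Lemma 2 upper bounds, i.e. $\int_{0}^{T}[(\mathbb{E}\phi_{y}(X_{t})-\phi_{y}(\mathbb{E}[X_{t}]))-(\mathbb{E}\phi_{\tilde{y}}(X_{t})-\phi_{\tilde{y}}(\mathbb{E}[X_{t}]))]dt$.

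Subtracting the two identities, the terms $\mathbb{E}\phi_{y}(X_{t})$ and $\mathbb{E}\phi_{\tilde{y}}(X_{t})$ coincide with those appearing on the right-hand side and cancel, so the lemma becomes equivalent to the pointwise inequality
\[
\mathbb{E}\left[\phi_{y}(\langle X_{t}\rangle_{t}^{y})\right]-\phi_{y}(\mathbb{E}[X_{t}])\geq\mathbb{E}\left[\phi_{\tilde{y}}(\langle X_{t}\rangle_{t}^{\tilde{y}})\right]-\phi_{\tilde{y}}(\mathbb{E}[X_{t}]),\qquad\text{a.e. }t.
\]
Intuitively this asserts that the noisier channel $\tilde{Y}_{0}^{T}$ produces the smaller Jensen gap, and I would establish it by inserting the intermediate quantity $\mathbb{E}[\phi_{y}(\langle X_{t}\rangle_{t}^{\tilde{y}})]$ and arguing in two steps.

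The first step, $\mathbb{E}[\phi_{y}(\langle X_{t}\rangle_{t}^{y})]\geq\mathbb{E}[\phi_{y}(\langle X_{t}\rangle_{t}^{\tilde{y}})]$, is the main obstacle: it compares the same convex function evaluated at conditional means taken over the \emph{non-nested} filtrations $\mathcal{F}_{Y}^{t}$ and $\mathcal{F}_{\tilde{Y}}^{t}$, so the tower property is not directly available. The resolution exploits the construction $\tilde{Y}_{t}=Y_{t}+H_{t}$ with $H$ a Poisson process independent of $(X_{0}^{T},Y_{0}^{T})$ (cf. (\ref{Ytilda})): given $\mathcal{F}_{Y}^{t}$, the history $\mathcal{F}_{\tilde{Y}}^{t}$ is a function of the known path of $Y$ and of $H_{0}^{t}$ alone, and $H_{0}^{t}$ is independent of $X_{t}$. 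This yields the conditional independence $X_{t}\Perp\mathcal{F}_{\tilde{Y}}^{t}\,|\,\mathcal{F}_{Y}^{t}$, from which a one-line computation gives $\langle X_{t}\rangle_{t}^{\tilde{y}}=\mathbb{E}[\langle X_{t}\rangle_{t}^{y}\,|\,\mathcal{F}_{\tilde{Y}}^{t}]$; conditional Jensen applied to the convex $\phi_{y}$ then yields the inequality.

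The second step, $\mathbb{E}[\phi_{y}(\langle X_{t}\rangle_{t}^{\tilde{y}})]-\phi_{y}(\mathbb{E}[X_{t}])\geq\mathbb{E}[\phi_{\tilde{y}}(\langle X_{t}\rangle_{t}^{\tilde{y}})]-\phi_{\tilde{y}}(\mathbb{E}[X_{t}])$, follows from Jensen's inequality once $\phi_{y}-\phi_{\tilde{y}}$ is shown to be convex on $[0,1]$. Here $\phi_{y}''(x)=A_{y}^{2}/(A_{y}x+\lambda_{y})$ and $\phi_{\tilde{y}}''(x)=A_{y}^{2}/(A_{y}x+\frac{A_{y}}{A_{z}}\lambda_{z})$, and since (\ref{dark}) is equivalent to $\lambda_{y}/A_{y}\leq\lambda_{z}/A_{z}$ we have $A_{y}x+\lambda_{y}\leq A_{y}x+\frac{A_{y}}{A_{z}}\lambda_{z}$, whence $\phi_{y}''\geq\phi_{\tilde{y}}''\geq0$ and $\phi_{y}-\phi_{\tilde{y}}$ is convex. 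Chaining the two steps proves the pointwise inequality; integrating over $[0,T]$ and substituting back into the two mutual-information identities yields the lemma. The only non-elementary ingredient is the causal-filter identity for $I(X_{0}^{T};\cdot)$, with everything else resting on Jensen's inequality and the single conditional-independence fact isolated in the first step.
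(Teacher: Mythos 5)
Your proof is correct and is essentially the paper's own argument: both rest on the exact causal-filtering representation of the Poisson mutual information, on the redundant-conditioning/tower/conditional-Jensen sandwich made available by the independence of $H_{0}^{T}$ from $(X_{0}^{T},Y_{0}^{T})$, and on the convexity of the difference function (your $\phi_{y}-\phi_{\tilde{y}}$ is the paper's $\pi=\phi_{y}-\frac{A_{y}}{A_{z}}\phi_{z}$ up to an affine term, which is immaterial for Jensen gaps). The only difference is cosmetic and mirror-image: the paper upgrades the $\tilde{Y}$-term's estimator from $\mathcal{F}_{\tilde{Y}}^{t}$ to $\mathcal{F}_{Y}^{t}$ by applying the sandwich to $\phi_{z}$, whereas you downgrade the $Y$-term's estimator from $\mathcal{F}_{Y}^{t}$ to $\mathcal{F}_{\tilde{Y}}^{t}$ by applying it to $\phi_{y}$, after which the final unconditional Jensen step is identical.
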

\begin{proof}
Note first that \cite{Kabanov}, \cite{Bremaud}
\begin{equation}
I(X_{0}^{T};Y_{0}^{T})=\int_{0}^{T}\left(\mathbb{E}[\phi_{y}(X_{t})]-\mathbb{E}[\phi_{y}(\mathbb{E}[X_{t}|\mathcal{F}_{Y}^{t}])]\right)dt,
\end{equation}
and
\begin{align}
I(X_{0}^{T};\tilde{Y}_{0}^{T})=\int_{0}^{T}\left(\mathbb{E}[\chi(X_{t})]-\mathbb{E}[\chi(\mathbb{E}[X_{t}|\mathcal{F}_{\tilde{Y}}^{t}])]\right)dt,
\end{align}
where $\chi(x)=(A_{y}x+\frac{A_{y}}{A_{z}}\lambda_{z})\ln(A_{y}x+\frac{A_{y}}{A_{z}}\lambda_{z})$. Consequently, using the fact that $\chi(x)=\frac{A_{y}}{A_{z}}\phi_{z}(x)+\ln(\frac{A_{y}}{A_{z}})(A_{y}x+\frac{A_{y}}{A_{z}}\lambda_{z})$ and after simplifications, we deduce the following
\begin{align}
I(X_{0}^{T};Y_{0}^{T})-I(X_{0}^{T};\tilde{Y}_{0}^{T})&=\int_{0}^{T}\left(\mathbb{E}[\phi_{y}(X_{t})]-\mathbb{E}[\phi_{y}(\mathbb{E}[X_{t}|\mathcal{F}_{Y}^{t}])]\right)dt\nonumber\\&-\frac{A_{y}}{A_{z}}\int_{0}^{T}\left(\mathbb{E}[\phi_{z}(X_{t})]-\mathbb{E}[\phi_{z}(\mathbb{E}[X_{t}|\mathcal{F}_{\tilde{Y}}^{t}])]\right)dt.
\end{align}
Recall that $\tilde{Y}_{0}^{T}=Y_{0}^{T}+H_{0}^{T}$, where $H_{0}^{T}$ is a homogeneous Poisson process independent of $(X_{0}^{T},Y_{0}^{T})$. Clearly, $\mathcal{F}_{\tilde{Y}}^{t}\subset\mathcal{F}_{Y}^{t}\vee\mathcal{F}_{H}^{t}$, with $\mathcal{F}_{Y}^{t}\vee\mathcal{F}_{H}^{t}=\sigma(\mathcal{F}_{Y}^{t}\cup\mathcal{F}_{H}^{t})$ being the smallest sigma-field containing $\mathcal{F}_{Y}^{t}\cup\mathcal{F}_{H}^{t}$. From the independence of $(X_{0}^{T},Y_{0}^{T})$ from $H_{0}^{T}$, using the law of redundant conditioning (see, e.g. \cite[pp. 281-282]{Bremaud}), we deduce that
\begin{equation}\label{egalite}
\mathbb{E}[X_{t}|\mathcal{F}_{Y}^{t}\vee\mathcal{F}_{H}^{t}]=\mathbb{E}[X_{t}|\mathcal{F}_{Y}^{t}] \quad\text{a.s.}
\end{equation}
We can now establish the following sequence of identities
\begin{align}
\mathbb{E}[\phi_{z}(\mathbb{E}[X_{t}|\mathcal{F}_{Y}^{t}])]&\stackrel{(a)}{=}\mathbb{E}[\phi_{z}(\mathbb{E}[X_{t}|\mathcal{F}_{Y}^{t}\vee\mathcal{F}_{H}^{t}])]\\
&\stackrel{(b)}{=}\mathbb{E}[\mathbb{E}[\phi_{z}(\mathbb{E}[X_{t}|\mathcal{F}_{Y}^{t}\vee\mathcal{F}_{H}^{t}])|\mathcal{F}_{\tilde{Y}}^{t}]]\\
&\stackrel{(c)}{\geq}\mathbb{E}[\phi_{z}(\mathbb{E}[\mathbb{E}[X_{t}|\mathcal{F}_{Y}^{t}\vee\mathcal{F}_{H}^{t}]|\mathcal{F}_{\tilde{Y}}^{t}])]\\
&\stackrel{(d)}{=}\mathbb{E}[\phi_{z}(\mathbb{E}[X_{t}|\mathcal{F}_{\tilde{Y}}^{t}])],
\end{align}
where $(a)$ follows from (\ref{egalite}), $(b)$ follows from the smoothing property of the conditional expectation, $(c)$ from Jensen's
inequality applied to the convex function $\phi_{z}(\cdot)$ and $(d)$ from the fact that
$\mathcal{F}_{\tilde{Y}}^{t}\subset\mathcal{F}_{Y}^{t}\vee\mathcal{F}_{H}^{t}$ and the smoothing property.

We deduce therefore that
\begin{align}
I(X_{0}^{T};Y_{0}^{T})-I(X_{0}^{T};\tilde{Y}_{0}^{T})&\leq\int_{0}^{T}\left(\mathbb{E}[\phi_{y}(X_{t})]-\mathbb{E}[\phi_{y}(\mathbb{E}[X_{t}|\mathcal{F}_{Y}^{t}])]\right)dt\nonumber\\
&-\frac{A_{y}}{A_{z}}\int_{0}^{T}\left(\mathbb{E}[\phi_{z}(X_{t})]-\mathbb{E}[\phi_{z}(\mathbb{E}[X_{t}|\mathcal{F}_{Y}^{t}])]\right)dt.
\end{align}
A simple derivation shows that the function $\pi(x)=\phi_{y}(x)-\frac{A_{y}}{A_{z}}\phi_{z}(x)$ is convex as
\begin{equation}
\pi''(x)=\frac{A_{y}(\lambda_{z}A_{y}-\lambda_{y}A_{z})}{(A_{y}x+\lambda_{y})(A_{z}x+\lambda_{z})}\geq0.
\end{equation}
Now invoking again Jensen's inequality we obtain that
\begin{align}
\mathbb{E}[\pi(\mathbb{E}[X_{t}|\mathcal{F}_{Y}^{t}])]\geq\pi(\mathbb{E}[\mathbb{E}[X_{t}|\mathcal{F}_{Y}^{t}]])=\pi(\mathbb{E}[X_{t}]).
\end{align}
Using this last inequality and after rearranging the terms we obtain the desired result, i.e.,
\begin{align}
I(X_{0}^{T};Y_{0}^{T})-I(X_{0}^{T};\tilde{Y}_{0}^{T})&\leq\int_{0}^{T}\left(\mathbb{E}[\phi_{y}(X_{t})]-\phi_{y}(\mathbb{E}[X_{t}])\right)dt\nonumber\\
&-\frac{A_{y}}{A_{z}}\int_{0}^{T}\left(\mathbb{E}[\phi_{z}(X_{t})]-\phi_{z}(\mathbb{E}[X_{t}])\right)dt.
\end{align}
\end{proof}
An alternative proof of this lemma using the link provided in \cite{Guo} between the MMSE and the mutual information in Poisson channels is given in Appendix A.
\begin{lemma} If $\mathbb{E}\int_{0}^{T}|X_{t}\ln X_{t}|<\infty$, then
\begin{align}
&I(X_{0}^{T};\tilde{Y}_{0}^{T})-I(X_{0}^{T};Z_{0}^{T})\nonumber\\&\leq(\frac{A_{y}}{A_{z}}-1)\int_{0}^{T}(\mathbb{E}[\phi_{z}(X_{t})]-\phi_{z}(\mathbb{E}[X_{t}]))dt
\end{align}
\end{lemma}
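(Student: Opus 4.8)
The plan is to represent both mutual informations through the causal minimum-mean-square-error formula already used in the proof of Lemma~5 and then reduce the whole statement to a comparison of two conditional-mean estimates. First I would write, for the processes $\tilde{Y}_0^T$ (rate $A_yX_t+\frac{A_y}{A_z}\lambda_z$) and $Z_0^T$ (rate $A_zX_t+\lambda_z$),
\[
I(X_0^T;\tilde{Y}_0^T)=\int_0^T\left(\mathbb{E}[\chi(X_t)]-\mathbb{E}[\chi(\mathbb{E}[X_t|\mathcal{F}_{\tilde{Y}}^t])]\right)dt,\qquad
I(X_0^T;Z_0^T)=\int_0^T\left(\mathbb{E}[\phi_z(X_t)]-\mathbb{E}[\phi_z(\mathbb{E}[X_t|\mathcal{F}_Z^t])]\right)dt,
\]
the finiteness of both being guaranteed by the hypothesis $\mathbb{E}\int_0^T|X_t\ln X_t|<\infty$. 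Using the identity $\chi(x)=\frac{A_y}{A_z}\phi_z(x)+\ln(\frac{A_y}{A_z})(A_yx+\frac{A_y}{A_z}\lambda_z)$ recorded in Lemma~5, the affine part of $\chi$ integrates out of the difference $\mathbb{E}[\chi(X_t)]-\mathbb{E}[\chi(\mathbb{E}[X_t|\mathcal{F}_{\tilde{Y}}^t])]$, because conditional expectation preserves the mean of an affine function; hence $I(X_0^T;\tilde{Y}_0^T)=\frac{A_y}{A_z}\int_0^T(\mathbb{E}[\phi_z(X_t)]-\mathbb{E}[\phi_z(a_t)])\,dt$, where I set $a_t:=\mathbb{E}[X_t|\mathcal{F}_{\tilde{Y}}^t]$.

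Writing $b_t:=\mathbb{E}[X_t|\mathcal{F}_Z^t]$ and $c:=\frac{A_y}{A_z}\geq1$, I would subtract the two representations and cancel the common $\mathbb{E}[\phi_z(X_t)]$ contributions, so that the lemma becomes the pointwise (in $t$) inequality
\[
\mathbb{E}[\phi_z(b_t)]+(c-1)\phi_z(\mathbb{E}[X_t])\leq c\,\mathbb{E}[\phi_z(a_t)].
\]
I would establish this from two facts: (i) the Jensen bound $\phi_z(\mathbb{E}[X_t])=\phi_z(\mathbb{E}[a_t])\leq\mathbb{E}[\phi_z(a_t)]$, valid since $\phi_z$ is convex and $\mathbb{E}[a_t]=\mathbb{E}[X_t]$; and (ii) the estimation comparison $\mathbb{E}[\phi_z(b_t)]\leq\mathbb{E}[\phi_z(a_t)]$. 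Granting (i) and (ii), the left-hand side is bounded by $\mathbb{E}[\phi_z(a_t)]+(c-1)\mathbb{E}[\phi_z(a_t)]=c\,\mathbb{E}[\phi_z(a_t)]$, and integrating over $[0,T]$ returns exactly the asserted bound.

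The crux, and the part that needs care, is (ii). Here I would exploit the thinning structure of Fig.~2: $Z_0^T$ is produced from $\tilde{Y}_0^T$ by independently retaining each point with probability $\frac{A_z}{A_y}$, which I would model through an auxiliary randomization $\Theta$ chosen independent of $(X_0^T,\tilde{Y}_0^T)$, so that $Z_0^T=g(\tilde{Y}_0^T,\Theta)$ for a causal thinning map $g$. Then $\mathcal{F}_Z^t\subset\mathcal{F}_{\tilde{Y}}^t\vee\mathcal{F}_\Theta^t$, and since $\Theta$ is independent of $\sigma(X_t)\vee\mathcal{F}_{\tilde{Y}}^t$, the law of redundant conditioning (exactly as in (\ref{egalite})) gives $\mathbb{E}[X_t|\mathcal{F}_{\tilde{Y}}^t\vee\mathcal{F}_\Theta^t]=a_t$. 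I would then run the same four-step chain as in Lemma~5: starting from $a_t=\mathbb{E}[X_t|\mathcal{F}_{\tilde{Y}}^t\vee\mathcal{F}_\Theta^t]$, apply the smoothing property to condition on $\mathcal{F}_Z^t$, then conditional Jensen for the convex $\phi_z$, and finally the inclusion $\mathcal{F}_Z^t\subset\mathcal{F}_{\tilde{Y}}^t\vee\mathcal{F}_\Theta^t$ together with smoothing to identify $\mathbb{E}[a_t|\mathcal{F}_Z^t]=b_t$; this yields $\mathbb{E}[\phi_z(a_t)]\geq\mathbb{E}[\phi_z(b_t)]$, which is (ii). The main obstacle is precisely the measure-theoretic construction of $\Theta$ that makes the thinning randomization genuinely redundant for estimating $X_t$ and validates the causal inclusion of filtrations at each $t$; once that is set up, the remaining manipulations are routine, and the factor $c-1\geq0$ is exactly what upgrades this ``finer-filtration-is-larger'' estimation inequality into the stated bound.
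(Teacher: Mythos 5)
Your proof is correct, but it takes a genuinely different route from the paper's. The paper proves this lemma information-theoretically: it introduces the erased process $\tilde{Z}_{0}^{T}$ (the points of $\tilde{Y}_{0}^{T}$ removed by the thinning), which is conditionally independent of $Z_{0}^{T}$ given $X_{0}^{T}$, shows $I(X_{0}^{T};\tilde{Y}_{0}^{T})-I(X_{0}^{T};Z_{0}^{T})\leq I(X_{0}^{T};\tilde{Z}_{0}^{T})$ by combining the data processing inequality with a subadditivity bound $I(X_{0}^{T};Z_{0}^{T},\tilde{Z}_{0}^{T})\leq I(X_{0}^{T};Z_{0}^{T})+I(X_{0}^{T};\tilde{Z}_{0}^{T})$ established through Wyner's partition-based definition of mutual information, and then applies the upper bound of Lemma 2 to $I(X_{0}^{T};\tilde{Z}_{0}^{T})$, whose affine term cancels exactly as in your computation. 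You instead work entirely on the estimation side: you invoke the exact causal conditional-mean representation (Kabanov/Br\'emaud) for both $I(X_{0}^{T};\tilde{Y}_{0}^{T})$ and $I(X_{0}^{T};Z_{0}^{T})$, strip the affine part of $\chi$, and reduce the claim to the pointwise inequality $\mathbb{E}[\phi_{z}(b_{t})]+(\frac{A_{y}}{A_{z}}-1)\phi_{z}(\mathbb{E}[X_{t}])\leq\frac{A_{y}}{A_{z}}\mathbb{E}[\phi_{z}(a_{t})]$, which you prove by Jensen plus the same filtration-comparison chain the paper uses for Lemma 6 (note: this is Lemma 6, not Lemma 5, in the paper's numbering) --- redundant conditioning with an independent thinning randomization, smoothing, and conditional Jensen. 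I verified your reduction and signs (the factor $\frac{A_{y}}{A_{z}}-1\geq0$ is where degradedness enters), and the step you flag as the main obstacle is not actually problematic: realize the thinning by marking the points of $\tilde{Y}_{0}^{T}$ with an i.i.d. Bernoulli$(\frac{A_{z}}{A_{y}})$ sequence $\Theta$ independent of $(X_{0}^{T},\tilde{Y}_{0}^{T})$; then every $Z_{s}$, $s\leq t$, is measurable with respect to $\mathcal{F}_{\tilde{Y}}^{t}\vee\sigma(\Theta)$, and no causal filtration $\mathcal{F}_{\Theta}^{t}$ is needed, since redundant conditioning only requires $\Theta$ to be independent of $(X_{0}^{T},\tilde{Y}_{0}^{T})$; the four-step chain then goes through verbatim. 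As for what each approach buys: the paper's argument needs only the one-sided bound of Lemma 2 plus elementary identities and mirrors the Gaussian warm-up of the introduction, whereas yours unifies Lemmas 6 and 7 under a single estimation-theoretic mechanism and avoids the partition/subadditivity machinery, at the cost of invoking the exact mutual-information formula for the doubly stochastic Poisson processes $\tilde{Y}_{0}^{T}$ and $Z_{0}^{T}$ under the stated integrability hypothesis. Your argument is also distinct from the paper's Appendix B proof, which instead differentiates the mutual information in the thinning parameter using the MMSE--mutual-information relation of Guo, Shamai and Verd\'u.
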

\begin{proof} Recall that $Z_{0}^{T}$ was obtained from $\tilde{Y}_{0}^{T}$ by thinning with erasure probability $1-\frac{A_{z}}{A_{y}}$. Let the process $\tilde{Z}_{0}^{T}$ denote those points that were erased, hence we have that $\tilde{Z}_{0}^{T}$ is a doubly stochastic Poisson process with instantaneous rate $(A_{y}-A_{z})X_{t}+(\frac{A_{y}}{A_{z}}-1)\lambda_{z}$. Moreover $Z_{0}^{T}$ and $\tilde{Z}_{0}^{T}$ are independent given $X_{0}^{T}$. We proceed with the proof of the lemma by showing that the following inequality holds
\begin{equation}\label{subadd}
I(X_{0}^{T};\tilde{Y}_{0}^{T})-I(X_{0}^{T};Z_{0}^{T})\leq I(X_{0}^{T};\tilde{Z}_{0}^{T}).
\end{equation}
Indeed, notice first that $X_{0}^{T}\rightarrow(Z_{0}^{T},\tilde{Z}_{0}^{T})\rightarrow\tilde{Y}_{0}^{T}$ is a Markov chain, hence from the data processing inequality we deduce that
\begin{equation}\label{data}
I(X_{0}^{T};\tilde{Y}_{0}^{T})=I(X_{0}^{T};Z_{0}^{T}+\tilde{Z}_{0}^{T})\leq I(X_{0}^{T};Z_{0}^{T},\tilde{Z}_{0}^{T}).
\end{equation}
Consider now two partitions of $\Omega$, $\mathcal{Q}_{Z}=\{A_{i}\}_{i=1}^{N_{1}}\subseteq\mathcal{F}_{Z}^{T}$ and  $\mathcal{Q}_{\tilde{Z}}=\{B_{j}\}_{j=1}^{N_{2}}\subseteq\mathcal{F}_{\tilde{Z}}^{T}$. Define two discrete random variables $D$ and $\tilde{D}$ on $\Omega$ as follows
$D(\omega)=i$ if $\omega\in A_{i}$ and $\tilde{D}(\omega)=j$ if $\omega\in B_{j}$. The mutual information $I(X_{0}^{T};Z_{0}^{T},\tilde{Z}_{0}^{T})$ can be computed as \cite{Wyner3}
\begin{equation}
I(X_{0}^{T};Z_{0}^{T},\tilde{Z}_{0}^{T})=\sup_{\mathcal{Q}_{Z},\mathcal{Q}_{\tilde{Z}}}I(X_{0}^{T};D,\tilde{D}),
\end{equation}
where the supremum is taken over all such partitions of $\Omega$.
We proceed to prove (\ref{subadd}) as follows
\begin{align}
I(X_{0}^{T};D,\tilde{D})&\stackrel{(a)}{=}H(D,\tilde{D})-H(D,\tilde{D}|X_{0}^{T})\nonumber\\
&\stackrel{(b)}{\leq} H(D)+H(\tilde{D})-H(D,\tilde{D}|X_{0}^{T})\nonumber\\
&\stackrel{(c)}{=}H(D)+H(\tilde{D})-H(D|X_{0}^{T})-H(\tilde{D}|X_{0}^{T})\nonumber\\
&\stackrel{(d)}{=}I(D;X_{0}^{T})+I(\tilde{D};X_{0}^{T}),
\end{align}
where $(a)$ follows from Lemma 3 (Wyner's lemma) applied to the random variable $(D,\tilde{D})$, $(d)$ is a also a direct instance of this lemma. The inequality $(b)$ is the independence bound on the entropy (which holds here since the random variables $D$ and $\tilde{D}$ are discrete). The equality $(c)$ results from the fact that $D$ and $\tilde{D}$ are conditionally independent given $X_{0}^{T}$, indeed $D\in\mathcal{F}_{Z}^{T}$ whereas $\tilde{D}\in\mathcal{F}_{\tilde{Z}}^{T}$ and $\mathcal{F}_{Z}^{T}$ and $\mathcal{F}_{\tilde{Z}}^{T}$ are conditionally independent given $\mathcal{F}_{X}^{T}$. Consequently we have
\begin{align}
I(X_{0}^{T};Z_{0}^{T},\tilde{Z}_{0}^{T})&=\sup_{\mathcal{Q}_{Z},\mathcal{Q}_{\tilde{Z}}}I(X_{0}^{T};D,\tilde{D})\nonumber\\
&\leq\sup_{\mathcal{Q}_{Z},\mathcal{Q}_{\tilde{Z}}}(I(X_{0}^{T};D)+I(X_{0}^{T};\tilde{D}))\nonumber\\
&=I(X_{0}^{T};Z_{0}^{T})+I(X_{0}^{T};\tilde{Z}_{0}^{T}).
\end{align}
Combining the last inequality with (\ref{data}) we deduce that\footnote{Note that since $\mathbb{E}\int_{0}^{T}|X_{t}\ln X_{t}|<\infty$, then $I(X_{0}^{T};Z_{0}^{T})<\infty$ and hence the inequality is well defined.}
\begin{equation}
I(X_{0}^{T};\tilde{Y}_{0}^{T})-I(X_{0}^{T};Z_{0}^{T})\leq I(X_{0}^{T};\tilde{Z}_{0}^{T}).
\end{equation}
Now using Lemma 2 we have
\begin{align}
I(X_{0}^{T};\tilde{Z}_{0}^{T})\leq\int_{0}^{T}(\mathbb{E}[\varphi(X_{t})]-\varphi(\mathbb{E}[X_{t}]))dt,
\end{align}
where
\begin{equation}
\varphi(x)=((A_{y}-A_{z})x+(\frac{A_{y}}{A_{z}}-1)\lambda_{z})\ln((A_{y}-A_{z})x+(\frac{A_{y}}{A_{z}}-1)\lambda_{z}).
\end{equation}
Notice now that
\begin{equation}
\varphi(x)=(\frac{A_{y}}{A_{z}}-1)\phi_{z}(x)+(\frac{A_{y}}{A_{z}}-1)\ln(\frac{A_{y}}{A_{z}}-1)(A_{z}x+\lambda_{z}).
\end{equation}
Plugging this identity in the inequality above, the linear term in $x$ disappears and we are left with the inequality presented in the lemma.
\end{proof}
An alternative proof of this lemma using the link provided in \cite{Guo} between the MMSE and the mutual information in Poisson channels is given in Appendix B.
\begin{theorem}If $\mathbb{E}\int_{0}^{T}|X_{t}\ln X_{t}|<\infty$, then
\begin{equation}
\frac{1}{T}I(X_{0}^{T};Y_{0}^{T}|Z_{0}^{T})\leq C_{s}
\end{equation}
\end{theorem}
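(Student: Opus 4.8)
The plan is to feed the two upper bounds from Lemma 6 and Lemma 7 into the decomposition (\ref{decomposition}) and add them. Writing $\psi(x) = \phi_y(x) - \phi_z(x)$, I first note that the two contributions involving $\mathbb{E}[\phi_z(X_t)] - \phi_z(\mathbb{E}[X_t])$ combine: their coefficients, $-\frac{A_y}{A_z}$ from Lemma 6 and $\frac{A_y}{A_z}-1$ from Lemma 7, sum to $-1$. Hence the $A_y/A_z$ factors disappear and the two bounds collapse to
\begin{equation}
I(X_{0}^{T};Y_{0}^{T}|Z_{0}^{T}) \le \int_{0}^{T}\left(\mathbb{E}[\psi(X_t)] - \psi(\mathbb{E}[X_t])\right)dt .
\end{equation}
It therefore suffices to bound the integrand by the constant $C_s$ for every $t$, uniformly over all laws of $X_t$ supported on $[0,1]$; integrating and dividing by $T$ then gives the theorem.

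For this pointwise bound the key structural fact is that $\psi$ is convex on $[0,1]$ under the degradedness hypotheses. A direct computation gives $\psi''(x) = \frac{A_y^2}{A_y x + \lambda_y} - \frac{A_z^2}{A_z x + \lambda_z}$, whose numerator over the common denominator is $A_y A_z x (A_y - A_z) + (A_y^2\lambda_z - A_z^2\lambda_y)$; the first summand is nonnegative by (\ref{intensity}) and the second by (\ref{dark}), so $\psi'' \ge 0$. Convexity lets me dominate $\psi$ by its chord on $[0,1]$, namely $\psi(x) \le (1-x)\psi(0) + x\,\psi(1)$, whence with $p = \mathbb{E}[X_t]$ the integrand satisfies
\begin{equation}
\mathbb{E}[\psi(X_t)] - \psi(\mathbb{E}[X_t]) \le (1-p)\psi(0) + p\,\psi(1) - \psi(p) ,
\end{equation}
the right-hand side of which I denote $F(p)$. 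This is exactly the value attained by the two-point input supported on $\{0,1\}$, so the problem reduces to the scalar maximization $\max_{p\in[0,1]} F(p)$.

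Finally I would show $\max_{p\in[0,1]} F(p) = C_s$. Since $F''(p) = -\psi''(p) \le 0$, the function $F$ is concave, so its maximizer is the unique stationary point, characterized by $F'(\alpha^{*}) = \psi(1) - \psi(0) - \psi'(\alpha^{*}) = 0$; the mean value theorem applied to $\psi$ on $[0,1]$ guarantees such an $\alpha^{*}\in(0,1)$, and exponentiating this stationarity condition reproduces precisely equation (\ref{equation}). Grouping the terms of $F(\alpha^{*})$ by the indices $y$ and $z$ exhibits $F(\alpha^{*})$ as $R_y - R_z$, which equals $C_s$ by the identity already recorded in the achievability proof (equivalently, it matches the closed form (\ref{secrecy})). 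Substituting the pointwise bound $F(p)\le C_s$ into the integral yields $I(X_{0}^{T};Y_{0}^{T}|Z_{0}^{T}) \le T C_s$, as claimed. The hard part is this pointwise optimization: one must extract convexity of $\psi$ from degradedness, justify that the chord bound is the correct reduction to a single scalar variable, and verify that the concave maximization simultaneously recovers the defining equation for $\alpha^{*}$ and the value $C_s$.
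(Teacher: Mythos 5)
Your proposal is correct and follows essentially the same route as the paper: summing the bounds of Lemmas 6 and 7 so the $\frac{A_{y}}{A_{z}}$ coefficients cancel to give the key inequality with $K(x)=\phi_{y}(x)-\phi_{z}(x)$ (your $\psi$), verifying convexity of $K$ from the degradedness conditions, reducing to the scalar maximization $\max_{\alpha\in[0,1]}\bigl(\alpha K(1)+(1-\alpha)K(0)-K(\alpha)\bigr)$ (your chord bound is exactly the paper's extremal two-point distribution argument), and identifying the stationarity condition with (\ref{equation}) and the optimal value with $C_{s}$. The only cosmetic differences are that the nonnegativity of $A_{y}^{2}\lambda_{z}-A_{z}^{2}\lambda_{y}$ uses both (\ref{intensity}) and (\ref{dark}), and uniqueness of $\alpha^{*}$ requires the standing assumption that at least one of these inequalities is strict (strict convexity of $K$), as the paper notes.
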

\begin{proof} Combining (\ref{decomposition}) and the result of the two previous lemmas yields
\begin{equation}\label{keyinequality}
I(X_{0}^{T};Y_{0}^{T})-I(X_{0}^{T};Z_{0}^{T})\leq\int_{0}^{T}(\mathbb{E}[K(X_{t})]-K(\mathbb{E}[X_{t}]))dt,
\end{equation}
where $K(x)=\phi_{y}(x)-\phi_{z}(x)$. A straightforward computation shows that
\begin{equation}
K''(x)=\frac{A_{z}A_{y}(A_{y}-A_{z})x+\lambda_{z}A_{y}^{2}-\lambda_{y}A_{z}^{2}}{(A_{y}x+\lambda_{y})(A_{z}x+\lambda_{z})},
\end{equation}
since $A_{y}\geq A_{z}$ and $\lambda_{z}A_{y}^{2}\geq\lambda_{y}A_{y}A_{z}\geq\lambda_{y}A_{z}^{2}$ we deduce that $K''(x)\geq0$. Moreover due to the assumption that at least one of the inequalities (\ref{intensity}) or (\ref{dark}) is strict, we conclude that $K''(x)>0$ (for $x>0$) and $K(\cdot)$ is strictly convex.

Notice now that we have
\begin{align}\label{opt1}
&\frac{1}{T}\int_{0}^{T}(\mathbb{E}[K(X_{t})]-K(\mathbb{E}[X_{t}]))dt\nonumber\\
&\stackrel{(a)}{\leq}\max_{0\leq\alpha\leq1}\left(\max_{\rho:\int_{0}^{1}x\rho(dx)=\alpha}\int_{0}^{1}K(x)\rho(dx)-K(\alpha)\right)\nonumber\\
&\stackrel{(b)}{=}\max_{0\leq\alpha\leq1}\left(\alpha K(1)+(1-\alpha)K(0)-K(\alpha)\right),
\end{align}
where $(a)$ follows from fixing $\mathbb{E}[X_{t}]=\alpha$ and maximizing over all distributions $\rho(x)$ on $[0,1]$ with mean $\alpha$.
Equality $(b)$ follows from the convexity of $K(\cdot)$ (refer to \cite{Kabanov} and \cite{Krein}), i.e., the maximizing distribution $\rho$ puts all the mass on the extremes $\{0,1\}$ and since the mean is $\alpha$, the maximizing $\rho$ assigns the mass $\alpha$ to $1$ and $1-\alpha$ to $0$.

The maximization of the last term shows that the optimal $\alpha^{*}$ is the unique solution to the equation
\[K'(\alpha^{*})=K(1)-K(0),\]
which, after some algebraic manipulations, gives that $\alpha^{*}$ is the solution to (\ref{equation}). The existence of $\alpha^{*}$ follows simply from the mean value theorem, whereas the uniqueness is a consequence of the strict monotonicity of $K'(x)$.

Consequently, the following is true
\begin{align}\label{result}
&\frac{1}{T}\int_{0}^{T}(\mathbb{E}[K(X_{t})]-K(\mathbb{E}[X_{t}]))dt\nonumber\\
&\leq\alpha^{*} K(1)+(1-\alpha^{*})K(0)-K(\alpha^{*})\nonumber\\
&=\alpha^{*}(A_{y}-A_{z})+\ln\left(\frac{\lambda_{y}^{\lambda_{y}}}{\lambda_{z}^{\lambda_{z}}}\right)+\ln\left(\frac{(A_{z}\alpha^{*}+\lambda_{z})^{\lambda_{z}}}{(A_{y}\alpha^{*}+\lambda_{y})^{\lambda_{y}}}\right).
\end{align}
This fact when combined with (\ref{keyinequality}) gives the result announced in the theorem.
\end{proof}
We are now in a position to prove the converse theorem.
\begin{theorem}[Converse]
If $R_{s}$ is an achievable secrecy rate then $R_{s}\leq C_{s}$.
\end{theorem}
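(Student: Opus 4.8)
The plan is to read the converse off from the machinery already in place, namely Lemma 4 together with Theorem 4; almost no new analytical work is needed. First I would unpack the definition of achievability: fix an arbitrary $\epsilon>0$, and for all sufficiently large $T$ take an $(M,T)$ code whose rate $R=\frac{1}{T}\ln M$ obeys $R\geq R_{s}-\epsilon$, whose error probability obeys $P_{e}\leq\epsilon$, and whose leakage obeys $\frac{1}{T}I(U;Z_{0}^{T})\leq\epsilon$. The object is to convert the upper bound on $R$ supplied by Lemma 4 into an upper bound on $R_{s}$.

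Before applying Lemma 4, Lemma 5, and Theorem 4, I would verify the hypothesis $\mathbb{E}\int_{0}^{T}|X_{t}\ln X_{t}|\,dt<\infty$ that all three require. This is automatic from the peak-power constraint: since $0\leq X_{t}\leq 1$, the map $x\mapsto|x\ln x|$ is bounded by $1/e$ on $[0,1]$, whence $\mathbb{E}\int_{0}^{T}|X_{t}\ln X_{t}|\,dt\leq T/e<\infty$ for any admissible input. Thus every code allowed by the achievability definition satisfies the finiteness condition, and Theorem 4 applies verbatim.

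I would then substitute the three available bounds into Lemma 4. Theorem 4 gives $\frac{1}{T}I(X_{0}^{T};Y_{0}^{T}|Z_{0}^{T})\leq C_{s}$; the secrecy constraint gives $\frac{1}{T}I(U;Z_{0}^{T})\leq\epsilon$; and the binary entropy is bounded, $H(P_{e})\leq\ln 2$. Using also $P_{e}\leq\epsilon$ so that $\frac{1}{1-P_{e}}\leq\frac{1}{1-\epsilon}$, Lemma 4 yields
\begin{equation}
R_{s}-\epsilon\leq R\leq\frac{1}{1-\epsilon}\left(C_{s}+\epsilon+\frac{\ln 2}{T}\right).
\end{equation}
Sending $T\to\infty$ kills the $\frac{\ln 2}{T}$ term, leaving $R_{s}-\epsilon\leq\frac{1}{1-\epsilon}(C_{s}+\epsilon)$, and then sending $\epsilon\to 0$ gives $R_{s}\leq C_{s}$, as claimed.

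I do not expect a genuine obstacle: the difficulty has been front-loaded into the convexity optimization of Theorem 4 and the Kolmogorov/Fano manipulations of Lemma 4. The only two points needing care are the verification of the finiteness hypothesis, which the peak constraint settles, and the order of the two limits --- one must let $T\to\infty$ first, so that the Fano term $\frac{\ln 2}{T}$ vanishes, and only afterward let $\epsilon\to 0$ so that both the factor $\frac{1}{1-\epsilon}$ and the additive $\epsilon$ disappear.
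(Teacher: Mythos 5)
Your proof is correct and follows essentially the same route as the paper's own converse: it chains Lemma 4 with the bound $\frac{1}{T}I(X_{0}^{T};Y_{0}^{T}|Z_{0}^{T})\leq C_{s}$ (which is Theorem 3 in the paper's numbering, not Theorem 4) and the defining properties of an achievable code, then takes limits. The only cosmetic differences are that you bound $H(P_{e})$ by $\ln 2$ where the paper uses $H(\epsilon)$ (both suffice), and that you explicitly verify the hypothesis $\mathbb{E}\int_{0}^{T}|X_{t}\ln X_{t}|\,dt<\infty$ from the peak constraint $0\leq X_{t}\leq 1$, a point the paper leaves implicit.
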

\begin{proof}
Since the secrecy rate $R_{s}$ is achievable then for all $0<\epsilon<\frac{1}{2}$ and sufficiently large $T$, there
exists an $(M,T)$ code such that $\frac{\ln M}{T}\geq R_{s}-\epsilon$, $P_{e}\leq\epsilon$ and
$\frac{1}{T}I(U;Z_{0}^{T})\leq\epsilon$. Hence we have
\begin{align}
R_{s}&\leq\frac{\ln M}{T}+\epsilon\nonumber\\
&\stackrel{(a)}{\leq}\frac{1}{T(1-P_{e})}\left(I(X_{0}^{T};Y_{0}^{T}|Z_{0}^{T})+I(U;Z_{0}^{T})+H(P_{e})\right)+\epsilon\nonumber\\
&\stackrel{(b)}{\leq}\frac{1}{1-P_{e}}\left(C_{s}+\frac{I(U;Z_{0}^{T})}{T}+\frac{H(P_{e})}{T}\right)+\epsilon\nonumber\\
&\stackrel{(c)}{\leq}\frac{1}{1-\epsilon}\left(C_{s}+\epsilon+\frac{H(\epsilon)}{T}\right)+\epsilon,
\end{align}
where inequality $(a)$ follows from Lemma 4, inequality $(b)$ from Theorem 3 and inequality $(c)$ from the properties of the code. Now since $\epsilon$ is arbitrary, letting $\epsilon\rightarrow0$ yields $R_{s}\leq C_{s}$.
\end{proof}
\section{Rate-Equivocation region}
In this section we turn our attention to the rate equivocation region of the degraded Poisson wiretap channel. The level of ignorance of the eavesdropper about the transmitted message $U$
will be measured here by the normalized equivocation given by
\begin{equation}
\Delta_{T}=\frac{H(U|Z_{0}^{T})}{H(U)}.
\end{equation}
\begin{definition}
A rate-equivocation pair $(R,d)$ is said to be \textit{achievable} for the Poisson wiretap channel if for all $\epsilon>0$ and all
sufficiently large $T$, there exists an $(M,T)$ code such that
\begin{align}
\frac{\ln M}{T}&\geq R-\epsilon\nonumber\\
P_{e}&\leq\epsilon\nonumber\\
\frac{H(U|Z_{0}^{T})}{H(U)}&\geq d-\epsilon
\end{align}
\end{definition}
The following theorem gives the rate equivocation region for the degraded Poisson Wiretap channel.
\begin{theorem}
The rate-equivocation region is the set of all rate-equivocation pairs $(R,d)$ for which there exits some $\alpha\in[0,1]$ such that
\begin{align}
Rd&\leq\alpha\ln\left(\frac{(A_{y}+\lambda_{y})^{A_{y}+\lambda_{y}}}{(A_{z}+\lambda_{z})^{A_{z}+\lambda_{z}}}\right)+(1-\alpha)\ln\left(\frac{\lambda_{y}^{\lambda_{y}}}{\lambda_{z}^{\lambda_{z}}}\right)-\ln\left(\frac{(A_{y}\alpha+\lambda_{y})^{A_{y}\alpha+\lambda_{y}}}{(A_{z}\alpha+\lambda_{z})^{A_{z}\alpha+\lambda_{z}}}\right)\\
R&\leq\alpha\ln\left((A_{y}+\lambda_{y})^{A_{y}+\lambda_{y}}\right)+(1-\alpha)\ln\left(\lambda_{y}^{\lambda_{y}}\right)-\ln\left((A_{y}\alpha+\lambda_{y})^{A_{y}\alpha+\lambda_{y}}\right)\\
d&\leq 1
\end{align}
\end{theorem}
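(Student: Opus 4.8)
The plan is to prove the region by establishing its achievability and its converse separately. The key preliminary observation is that, writing $\mathcal{R}_u(\alpha)=\alpha\phi_u(1)+(1-\alpha)\phi_u(0)-\phi_u(\alpha)$ for $u\in\{y,z\}$ with $\phi_u$ as in Theorem 3, the right-hand side of the second inequality is exactly $\mathcal{R}_y(\alpha)$, while the right-hand side of the first inequality equals $\mathcal{R}_y(\alpha)-\mathcal{R}_z(\alpha)=\alpha K(1)+(1-\alpha)K(0)-K(\alpha)$, the quantity already met in \ref{result}. Moreover $Rd=\frac{1}{T}H(U|Z_{0}^{T})$ is the equivocation rate. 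Thus the claimed region is the familiar degraded-wiretap region $\{R\le\mathcal{R}_y(\alpha),\ Rd\le\mathcal{R}_y(\alpha)-\mathcal{R}_z(\alpha),\ d\le1\}$, in which $\mathcal{R}_y$ and $\mathcal{R}_z$ play the roles of $I(X;Y)$ and $I(X;Z)$ for the fast-cycling Bernoulli$(\alpha)$ input.

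For achievability I would generalize the construction in the proof of Theorem 2 by treating both the duty cycle $\alpha$ and the per-bin randomization rate as free parameters. Fix $\alpha\in[0,1]$ and build a Wyner code $\mathcal{W}(T,M_y,\alpha M_y)$ with $M_y=e^{(R+R_{\mathrm{rand}})T}$, partitioned into $M=e^{RT}$ bins of size $M_z=e^{R_{\mathrm{rand}}T}$; the message selects a bin and a codeword is drawn uniformly inside it. Reliability at the legitimate receiver holds whenever $R+R_{\mathrm{rand}}<\mathcal{R}_y(\alpha)$, since the overall code is then main-channel capacity achieving. Repeating verbatim the Fano/sufficient-statistic estimate \ref{ineq6}, with $\frac{1}{T}I(\mathbf{X};\mathbf{Z})\le\mathcal{R}_z(\alpha)$ from Lemma 2 and with the bins eavesdropper-decodable when $R_{\mathrm{rand}}<\mathcal{R}_z(\alpha)$, gives $\frac{1}{T}I(U;Z_{0}^{T})\le\mathcal{R}_z(\alpha)-R_{\mathrm{rand}}+o(1)$, hence an equivocation rate $\frac{1}{T}H(U|Z_{0}^{T})\ge R-\mathcal{R}_z(\alpha)+R_{\mathrm{rand}}-o(1)$. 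Taking $R_{\mathrm{rand}}=\min(\mathcal{R}_z(\alpha),\mathcal{R}_y(\alpha)-R)$ drives the equivocation rate to $\min(R,\mathcal{R}_y(\alpha)-\mathcal{R}_z(\alpha))$; since any smaller $d$ is then automatically met, every pair $(R,d)$ in the region for this $\alpha$ is achievable after letting $\epsilon\to0$.

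For the converse I would single-letterize both constraints through the \emph{same} parameter $\alpha\triangleq\frac{1}{T}\int_{0}^{T}\mathbb{E}[X_t]\,dt\in[0,1]$. The rate bound follows from Fano and data processing, $RT(1-P_e)-H(P_e)\le I(U;Y_{0}^{T})\le I(X_{0}^{T};Y_{0}^{T})$, then Lemma 2, the extreme-point maximization of $\mathbb{E}[\phi_y(X_t)]-\phi_y(\mathbb{E}[X_t])$ at fixed mean as in \ref{opt1}, and finally Jensen applied to the concave map $\beta\mapsto\mathcal{R}_y(\beta)$, which yields $\frac{1}{T}\int_{0}^{T}\mathcal{R}_y(\mathbb{E}[X_t])\,dt\le\mathcal{R}_y(\alpha)$. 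For the equivocation, I would write $H(U|Z_{0}^{T})=\bigl(I(U;Y_{0}^{T})-I(U;Z_{0}^{T})\bigr)+H(U|Y_{0}^{T})$, use $I(U;Y_{0}^{T})-I(U;Z_{0}^{T})=I(U;Y_{0}^{T}|Z_{0}^{T})\le I(X_{0}^{T};Y_{0}^{T}|Z_{0}^{T})=I(X_{0}^{T};Y_{0}^{T})-I(X_{0}^{T};Z_{0}^{T})$ exactly as in the proofs of Lemmas 4 and 5, invoke the key inequality \ref{keyinequality}, and bound $H(U|Y_{0}^{T})$ by Fano. The convexity of $K$ and the concavity of $\beta\mapsto\alpha K(1)+(1-\alpha)K(0)-K(\beta)$ then deliver $Rd\le\mathcal{R}_y(\alpha)-\mathcal{R}_z(\alpha)+o(1)$ for the same $\alpha$. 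The bound $d\le1$ is immediate from $H(U|Z_{0}^{T})\le H(U)$. Passing to a subsequence along which $\alpha$ converges (compactness of $[0,1]$) and using continuity of $\mathcal{R}_y,\mathcal{R}_z$ produces one $\alpha$ satisfying all three constraints.

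The main obstacle is precisely this single-letterization in the converse: the region is a union over $\alpha$, so the rate bound and the equivocation bound must be certified by one common value. Defining $\alpha$ as the time-averaged duty cycle and exploiting that both $\mathcal{R}_y(\cdot)$ and $g_K(\cdot)\triangleq\mathcal{R}_y(\cdot)-\mathcal{R}_z(\cdot)$ are concave — their second derivatives being $-\phi_y''$ and $-K''$, both negative since $\phi_y$ is convex and $K$ is strictly convex (proof of Theorem 3) — lets a single Jensen step handle each, which is the crux. On the achievability side the delicate point is balancing the two active constraints, $d\le1$ and $Rd\le\mathcal{R}_y(\alpha)-\mathcal{R}_z(\alpha)$, through the choice of $R_{\mathrm{rand}}$ (with the boundary cases $R_{\mathrm{rand}}=\mathcal{R}_z(\alpha)$ or $\mathcal{R}_y(\alpha)-R$ handled by the usual vanishing-$\epsilon$ slack); everything else reduces to the code analysis already carried out for Theorem 2.
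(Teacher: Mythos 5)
Your proposal is correct and follows essentially the same route as the paper: for achievability, Wyner codes partitioned into bins with the Fano/sufficient-statistic estimate (the paper's inequality (\ref{ineq6})) at the eavesdropper, and for the converse, the key inequality (\ref{keyinequality}) together with Fano, single-letterized through the common parameter $\alpha=\frac{1}{T}\int_{0}^{T}\mathbb{E}[X_{t}]\,dt$ and convexity of $K$ and $\phi_{y}$. The only differences are cosmetic: the paper pins the bin rate directly to the boundary case $Rd=R_{y}-R_{z}-\epsilon R$ where you optimize a free randomization rate $R_{\mathrm{rand}}$ via a minimum, and your closing compactness/subsequence step makes explicit a limit argument the paper leaves implicit.
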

To ease the notations, using the functions $K(\cdot)$ and $\phi_{y}(\cdot)$, we can rewrite the two first inequalities as $Rd\leq\alpha K(1)+(1-\alpha)K(0)-K(\alpha)$ and $R\leq \alpha \phi_{y}(1)+(1-\alpha)\phi_{y}(0)-\phi_{y}(\alpha)$.
\begin{proof}
The main ingredients needed to prove this theorem has been already used to obtain the secrecy capacity. More specifically, for the achievability proof we will use stochastic encoding combined with Wyner codes for the Poisson channel, and for the converse we will use the key inequality (\ref{keyinequality}) established by Lemma 6 and 7.
\subsection{Direct result}
Note first that for a fixed rate $R$, if the rate equivocation pair $(R,d)$ is achievable then the pair $(R,\tilde{d})$ is achievable for all $0\leq\tilde{d}\leq d$. Hence, in order to establish the direct result, it is enough to prove that any rate-equivocation pair $(R,d)$ satisfying $Rd=\alpha K(1)+(1-\alpha)K(0)-K(\alpha)$, $R\leq \alpha \phi_{y}(1)+(1-\alpha)\phi_{y}(0)-\phi_{y}(\alpha)$
and $d\leq 1$ for some $\alpha\in[0,1]$ is achievable.

Define
\begin{align}
R_{u}&=\alpha\phi_{u}(1)+(1-\alpha)\phi_{u}(0)-\phi_{u}(\alpha), \text{ }u\in\{y,z\}.
\end{align}
Let $\epsilon>0$ be arbitrary (small enough) and let $R=\frac{R_{y}-R_{z}-\epsilon R}{d}$ with $d\leq1$ and $R\leq \alpha \phi_{y}(1)+(1-\alpha)\phi_{y}(0)-\phi_{y}(\alpha)$. The message $U$ to be transmitted is selected uniformly randomly from $\mathcal{U}=\{1,...,M\}$ with $M=e^{RT}$. Define $M_{y}=e^{(R_{y}-3\epsilon\frac{R}{2})T}$ and, following the steps described for the achievability of the secrecy capacity, construct the Wyner code $\mathcal{C}=\mathcal{W}(T,M_{y},\alpha M_{y})$. Partition this code arbitrarily into $M$ smaller subcodes, i.e., $\mathcal{C}=\cup_{i=1}^{M}\mathcal{C}_{i}$. The cardinality of each subcode $\mathcal{C}_{i}$ will be equal to $M_{z}=\frac{M_{y}}{M}=e^{(R_{y}-R-3\epsilon\frac{R}{2})T}$. Notice that with this choice of parameters we have
\begin{align}
\frac{1}{T}\ln M_{z}=R_{y}-R-3\epsilon\frac{R}{2}\leq R_{y}-Rd-3\epsilon\frac{R}{2}=R_{z}-\epsilon\frac{R}{2}.
\end{align}
The probability of error $P_{e}$ of the legitimate receiver can be made less than $\epsilon$ because the Wyner code $\mathcal{C}$ can achieve the rate $R_{y}$.

The equivocation of the code $\mathcal{C}$ can be lower bounded using the same steps used to established the upper bound on $I(U;Z_{0}^{T})$ for the secrecy capacity, as follows
\begin{align}
\Delta_{T}&=\frac{H(U|Z_{0}^{T})}{H(U)}=1-\frac{I(U;Z_{0}^{T})}{RT}\\
&\stackrel{(a)}{\geq}1-\frac{R_{z}}{R}+\frac{1}{RT}\ln M_{z}-\frac{1}{RT}\left(H(\delta)+\delta\ln M_{z}\right)\\
&=1-\frac{R_{z}}{R}+\frac{R_{y}-R-3\epsilon\frac{R}{2}}{R}-\frac{1}{RT}\left(H(\delta)+\delta\ln M_{z}\right)\\
&\geq d-\frac{\epsilon}{2}-\frac{1}{RT}H(\delta)-\delta (\frac{R_{z}}{R}-\frac{\epsilon}{2}).
\end{align}
In the above, inequality $(a)$ follows from (\ref{ineq6}) and $\delta=\frac{1}{M}\sum_{m=1}^{M}\delta_{m}$ where $\delta_{m}$ is the probability of error for the code $\mathcal{C}_{m}$ ($1\leq m\leq M$) with the (optimal) decoder described previously.

As was discussed before, the term $\frac{1}{RT}H(\delta)+\delta (\frac{R_{z}}{R}-\frac{\epsilon}{2})$ can be made less than $\frac{\epsilon}{2}$ for $T$ large enough, which means that
\begin{equation}
\Delta_{T}=\frac{H(U|Z_{0}^{T})}{H(U)}\geq d-\epsilon.
\end{equation}
This establishes that the rate-equivocation pair $(R,d)$ is achievable.
\subsection{Converse}
For every $(M,T)$ code with rate $R_{T}=\frac{\ln M}{T}$ and equivocation $\Delta_{T}=\frac{H(U|Z_{0}^{T})}{H(U)}$ we have
\begin{align}
TR_{T}\Delta_{T}&=H(U|Z_{0}^{T})=H(U)-I(U;Z_{0}^{T})\nonumber\\
&= H(U|Y_{0}^{T})+I(U;Y_{0}^{T})-I(U;Z_{0}^{T})\nonumber\\
&\leq H(U|\hat{U})+I(U;Y_{0}^{T}|Z_{0}^{T})\nonumber\\
&\leq H(P_{e})+P_{e}\ln M+I(X_{0}^{T};Y_{0}^{T}|Z_{0}^{T}).
\end{align}
From Lemma 6 and 7 (cf. (\ref{keyinequality})) we have that
\begin{equation}
I(X_{0}^{T};Y_{0}^{T}|Z_{0}^{T})\leq\int_{0}^{T}(\mathbb{E}[K(X_{t})]-K(\mathbb{E}[X_{t}]))dt.
\end{equation}
Consequently, we deduce that
\begin{align}
R_{T}\Delta_{T}&\leq \frac{H(P_{e})+P_{e}\ln M}{T}+\frac{1}{T}\int_{0}^{T}(\mathbb{E}[K(X_{t})]-K(\mathbb{E}[X_{t}]))dt\\
&\leq \frac{H(P_{e})+P_{e}\ln M}{T}+\alpha K(1)+(1-\alpha)K(0)-K(\alpha),
\end{align}
with $\alpha=\frac{1}{T}\int_{0}^{T}\mathbb{E}[X_{t}]dt$ and the last inequality follows from the convexity of the function $K(\cdot)$. Note that since $0\leq X_{t}\leq1$ it follows that $0\leq\alpha\leq1$.

Similarly, we have that
\begin{align}
R_{T}=\frac{H(U)}{T}&=\frac{1}{T}H(U|Y_{0}^{T})+\frac{1}{T}I(U;Y_{0}^{T})\nonumber\\
&\leq \frac{1}{T}H(U|\hat{U})+\frac{1}{T}I(X_{0}^{T};Y_{0}^{T})\nonumber\\
&\stackrel{(a)}{\leq} \frac{1}{T}(H(P_{e})+P_{e}\ln M)+\frac{1}{T}\int_{0}^{T}(\mathbb{E}[\phi_{y}(X_{t})]-\phi_{y}(\mathbb{E}[X_{t}]))dt\nonumber\\
&\stackrel{(b)}{\leq} \frac{H(P_{e})+P_{e}\ln M}{T}+\alpha \phi_{y}(1)+(1-\alpha)\phi_{y}(0)-\phi_{y}(\alpha),
\end{align}
where $(a)$ follows from Fano's inequality and Lemma 2 and (b) follows from the convexity of the function $\phi_{y}(\cdot)$.

Assume now that $(R,d)$ is achievable, then for all $0<\epsilon<\frac{1}{2}$ and all sufficiently large $T$, there
exists an $(M,T)$ code such that $R_{T}\geq R-\epsilon$, $P_{e}\leq\epsilon$ and
$\Delta_{T}\geq d-\epsilon$. By definition $\Delta_{T}\leq1$, and hence $d\leq1+\epsilon$ and in light of the previous inequalities we have
\begin{align}
(R-\epsilon)(d-\epsilon)&\leq \frac{H(\epsilon)+\epsilon\ln M}{T}+\alpha K(1)+(1-\alpha)K(0)-K(\alpha)\\
(R-\epsilon)&\leq \frac{H(\epsilon)+\epsilon\ln M}{T}+\alpha \phi_{y}(1)+(1-\alpha)\phi_{y}(0)-\phi_{y}(\alpha).
\end{align}
Now since $\epsilon$ is arbitrary, letting $\epsilon\rightarrow0$ yields the desired result.
\end{proof}
\section{Conclusion and discussion}
Motivated by the practical advantages of optical communication over
RF for secure communication, we have derived
the secrecy capacity and characterized the rate-equivocation region of the degraded
Poisson wiretap channel.

Several interesting problems remain open and deserve further investigation. One is the non-degraded Poisson Wiretap
channel. One can imagine a situation in which the eavesdropper is
equipped with a powerful
detector characterized by a negligible dark current (i.e., $\lambda_{z}=0$). If the detector of the legitimate receiver has a higher received power from the
transmitter but is more prone to dark current, then
the channel will not be degraded. This is a practically-important situation but is not covered by the results of this paper.

Another issue that we have not considered is fading. Indeed, for wireless optical communications, atmospheric turbulence
can induce random fluctuations of the intensity of the transmitted light beam \cite{Chakraborty}, which creates fading and complicates secure communication.
Note, however, that this fading is fundamentally different from multipath
fading and is more manageable from the standpoint of achieving secure
communication.

MIMO Poisson channels have received some interest lately (see \cite{Chakraborty2} and the references therein), and as has been done in the Gaussian setting, it would be interesting to see
the impact of having multiple antennas on the secrecy capacity in the Poisson regime.

We believe that the results derived in this paper and the tools used to derive them could be used to address these problems.
\newpage
\begin{appendices}
\section{An MMSE Proof for Lemma $6$}
In this first appendix, we provide an alternative proof for Lemma 6. This proof uses the link established in \cite{Guo} between the MMSE and the mutual information in Poisson channels. Note first that since $\mathbb{E}\int_{0}^{T}|X_{t}\ln X_{t}|<\infty$, we have that $I(X_{0}^{T};\mathcal{P}_{0}^{T}(A_{y} X_{0}^{T}+\lambda))$ is differentiable and Theorem 3 in \cite{Guo} states that
\begin{align}
&\frac{d}{d\lambda}I(X_{0}^{T};\mathcal{P}_{0}^{T}(A_{y} X_{0}^{T}+\lambda))=\nonumber\\&\int_{0}^{T}\mathbb{E}\{\ln(A_{y} X_{t}+\lambda)-\ln\langle A_{y} X_{t}+\lambda\rangle_{T}\}dt.
\end{align}
Notice now that
\begin{align}
&I(X_{0}^{T};\tilde{Y}_{0}^{T})-I(X_{0}^{T};Y_{0}^{T})=\nonumber\\&\int_{\lambda_{y}}^{\lambda_{y}+\tilde{\lambda}}\frac{d}{d\lambda}I(X_{0}^{T};\mathcal{P}_{0}^{T}(A_{y} X_{0}^{T}+\lambda))d\lambda.
\end{align}
Therefore
\begin{align}
  &I(X_{0}^{T};Y_{0}^{T})-I(X_{0}^{T};\tilde{Y}_{0}^{T})= \nonumber \\
  &\int_{\lambda_{y}}^{\lambda_{y}+\tilde{\lambda}}\!
\!\left(\!\!\int_{0}^{T}\!\!(\mathbb{E}\{\ln\langle A_{y}X_{t}+\lambda\rangle_{T}\}\!-\!\mathbb{E}\{\ln(A_{y} X_{t}+\lambda)\})dt\right)d\lambda.
\end{align}
Since the function $\ln(\cdot)$ is concave, using Jensen's inequality and the iterative conditioning property we have
\[\mathbb{E}\{\ln\langle A_{y}X_{t}+\lambda\rangle_{T}\}\leq\ln\mathbb{E}[\langle A_{y}X_{t}+\lambda\rangle_{T}]=\ln(A_{y}\mathbb{E}[X_{t}]+\lambda).\]
Making use of this inequality and the fact that $\lambda_{y}+\tilde{\lambda}=\frac{A_{y}}{A_{z}}\lambda_{z}$ we deduce that
\begin{align}
  &I(X_{0}^{T};Y_{0}^{T})-I(X_{0}^{T};\tilde{Y}_{0}^{T})\leq \nonumber \\
  &\!
\int_{0}^{T}\!\!\left(\int_{\lambda_{y}}^{\frac{A_{y}}{A_{z}}\lambda_{z}}\!\!\!\!\!\!\!\!\ln(A_{y}\mathbb{E}[X_{t}]+\lambda)d\lambda\!-\!\mathbb{E}\{\int_{\lambda_{y}}^{\frac{A_{y}}{A_{z}}\lambda_{z}}\!\!\!\!\!\!\!\!\ln(A_{y} X_{t}+\lambda)d\lambda\}\right)dt,
\end{align}
where we have also invoked Fubini's theorem to make the necessary exchanges between the integrals and the
expectation operator. The desired inequality is then obtained after some algebraic manipulations using the
elementary identity
\begin{equation}
\int\ln(A_{y}x+\lambda)d\lambda=(A_{y}x+\lambda)\ln(A_{y}x+\lambda)-\lambda.
\end{equation}

\section{An MMSE Proof for Lemma $7$}
Here we provide an alternative proof for Lemma 7. For ease of notations define $W_{t}=A_{y}X_{t}+\frac{A_{y}}{A_{z}}\lambda_{z}$. Using Theorem 4 in \cite{Guo} we obtain that
\begin{align}
\frac{d}{d\alpha}I(W_{0}^{T};\mathcal{P}_{0}^{T}(\alpha W_{0}^{T}))&=\int_{0}^{T}\mathbb{E}[W_{t}\ln(\alpha
W_{t})]dt-\int_{0}^{T}\mathbb{E}[\mathbb{E}[W_{t}|\mathcal{P}_{0}^{T}(\alpha W_{0}^{T})]\ln( \mathbb{E}[\alpha W_{t}|\mathcal{P}_{0}^{T}(\alpha W_{0}^{T})])]dt\nonumber\\
&=\int_{0}^{T}\mathbb{E}[W_{t}\ln W_{t}]dt-\int_{0}^{T}\mathbb{E}[\mathbb{E}[W_{t}|\mathcal{P}_{0}^{T}(\alpha W_{0}^{T})]\ln(
\mathbb{E}[W_{t}|\mathcal{P}_{0}^{T}(\alpha W_{0}^{T})])]dt,
\end{align}
where the second equality is obtained after some simplifications using the identity $\mathbb{E}[\mathbb{E}[W_{t}|\mathcal{P}_{0}^{T}(\alpha W_{0}^{T})]]=\mathbb{E}[W_{t}]$. Now by the convexity of the function $C(x)=x\ln(x)$, Jensen's inequality gives
\begin{align}
\mathbb{E}[C(\mathbb{E}[W_{t}|\mathcal{P}_{0}^{T}(\alpha W_{0}^{T})])]&\geq C(\mathbb{E}[\mathbb{E}[W_{t}|\mathcal{P}_{0}^{T}(\alpha W_{0}^{T})]])\nonumber\\
&=C(\mathbb{E}[W_{t}]).
\end{align}
It follows therefore that
\begin{align}
\frac{d}{d\alpha}I(W_{0}^{T};\mathcal{P}_{0}^{T}(\alpha W_{0}^{T}))\leq\int_{0}^{T}\mathbb{E}[W_{t}\ln W_{t}]dt-\int_{0}^{T}\mathbb{E}[W_{t}]\ln \mathbb{E}[W_{t}]dt.
\end{align}
Clearly we have that $I(W_{0}^{T};\mathcal{P}_{0}^{T}(\alpha W_{0}^{T}))=I(X_{0}^{T};\mathcal{P}_{0}^{T}(\alpha
W_{0}^{T}))$. Also we have that $\tilde{Y}_{0}^{T}=\mathcal{P}_{0}^{T}(W_{0}^{T})$ and $Z_{0}^{T}=\mathcal{P}_{0}^{T}(\frac{A_{z}}{A_{y}} W_{0}^{T})$. Consequently
\begin{align}
I(X_{0}^{T};\tilde{Y}_{0}^{T})-
I(X_{0}^{T};Z_{0}^{T})&=\int_{\frac{A_{z}}{A_{y}}}^{1}\frac{d}{d\alpha}I(W_{0}^{T};\mathcal{P}_{0}^{T}(\alpha
W_{0}^{T}))d\alpha.
\end{align}
Using the previous inequality, we conclude that
\begin{align}
I(X_{0}^{T};\tilde{Y}_{0}^{T})-
I(X_{0}^{T};Z_{0}^{T})&\leq\int_{\frac{A_{z}}{A_{y}}}^{1}\left(\int_{0}^{T}\mathbb{E}[W_{t}\ln W_{t}]dt-\int_{0}^{T}\mathbb{E}[W_{t}]\ln \mathbb{E}[W_{t}]dt\right)d\alpha\nonumber\\
&=(1-\frac{A_{z}}{A_{y}})\left(\int_{0}^{T}\mathbb{E}[(A_{y}X_{t}+\frac{A_{y}}{A_{z}}\lambda_{z})\ln(A_{y}X_{t}+\frac{A_{y}}{A_{z}}\lambda_{z})]dt\right.\nonumber\\&-\left.\int_{0}^{T}(A_{y}\mathbb{E}[X_{t}]+\frac{A_{y}}{A_{z}}\lambda_{z})\ln(A_{y}\mathbb{E}[X_{t}]+\frac{A_{y}}{A_{z}}\lambda_{z})dt\right).
\end{align}
After some simplifications, the last inequality gives the desired result, i.e.,
\begin{align}
I(X_{0}^{T};\tilde{Y}_{0}^{T})-
I(X_{0}^{T};Z_{0}^{T})\leq(\frac{A_{y}}{A_{z}}-1)\int_{0}^{T}\left(\mathbb{E}[\phi_{z}(X_{t})]-\phi_{z}(E[X_{t}])\right)dt.
\end{align}
\end{appendices}

\end{document}